\newtheorem{assumption}{Assumption}
\begin{document}

\title{Optimally Integrating Ad Auction into E-Commerce Platforms}

\author{
Weian Li  \inst{1} \and
Qi Qi \inst{2} \and
Changjun Wang \inst{3} \and
Changyuan Yu \inst{4}
}

\institute{Center on Frontiers of Computing Studies, Peking University, Beijing, China\\
\email{weian\underline{~}li@pku.edu,cn} \and
Gaoling School of Artificial Intelligence, Renmin University of China, Beijing, China\\
\email{qi.qi@ruc.edu.cn}\and
Academy of Mathematics and Systems Science, Chinese Academy of Sciences, Beijing, China\\
\email{wcj@amss.ac.cn} \and
Baidu Inc., Beijing, China\\
\email{yuchangyuan@baidu.com}
}

\maketitle


\begin{abstract}
    Advertising becomes one of the most popular ways of monetizing an online transaction platform. Usually, sponsored advertisements are posted on the most attractive positions to enhance the number of clicks. However, multiple e-commerce platforms are aware that this action may hurt the search experience of users, even though it can bring more incomes. To balance the advertising revenue and the user experience loss caused by advertisements, most e-commerce platforms choose fixing some areas for advertisements and adopting some simple restrictions on the number of ads, such as a fixed number $K$ of ads on the top positions or one advertisement for every $N$ organic searched results. Different from these common rules of treating the allocation of ads separately (from the arrangements of the organic searched items), in this work we build up an integrated system with mixed arrangements of advertisements and organic items. We focus on the design of truthful mechanisms to properly list the advertisements and organic items and optimally trade off the instant revenue and the user experience. Furthermore, for different settings and practical requirements, we extend our optimal truthful allocation mechanisms to cater for these realistic conditions. Finally, 
    we exert several experiments to verify the improvement of our mechanism compared to the common-used advertising mechanism.
\end{abstract}

\keywords{E-Commerce Platforms \and Ad Auctions \and Optimal Mixed Arrangements.}

\setcounter{footnote}{1}
\section{Introduction}
With the development of Internet and mobile devices, e-commerce platforms have become the most fashionable online marketplace to bridge merchants and consumers, whose popularity is originated from its convenient shopping mode. That is, whenever and wherever consumers want to pick up their desirable products, they only require to enter a relevant keyword into the search bar of Apps of e-commerce platforms by mobile devices. Then, platforms will return a list of the matching products in the search result pages provided to be selected. Hence, how to recommend products accurately and effectively is a considerable problem of platforms, which also affects both the short-run revenue and the long-run prosperity.

\setcounter{page}{1}

Nowadays, in one standard webpage of search results, based on a given keyword, it usually presents two types of items, \emph{organic search results} and \emph{sponsored advertisements\footnote[1]{Sponsored advertisements are usually labeled by ``Sponsored'' or ``Ad'' to be distinguished with organic search results.}}. 
The platform calculates the displayed sequence of two types of items followed by certain rules. Traditionally, organic search results and sponsored advertisements are totally separate components in the process of arranging items. First, e-commerce platforms will decide how many top positions are used to display advertisements, and then run a preset  auction mechanism, like generalized second price auction (GSP), to output the ads. For the rest positions, they are used to show the organic items ranked by the relevance of items or other criteria which depend on different platforms (e.g., see \cite{AL04,AU06,WH10}).

The first step of the above process is so-called \emph{Sponsored Search Auction}, which is first launched by Google at the end of 20th century. Several common pricing models can be adopted in sponsored search auctions, like pay-per-click, pay-per-impression and pay-per-action. In this paper, we focus on the pay-per-click model which means that, for the winners of auction, they only pay when their advertisements are clicked. 
Due to the dissatisfaction with the ranking of organic items, some merchants hope to join the auction (becoming the advertisers) and charge some extra money for improving the position and drawing more attention from customers. Because of the millions of search everyday, sponsored search auction has become the vital instant incoming source of e-commerce platform today. Different from the first step, ranking organic items aims to enhance the efficiency of search and user experience, so its criterion is usually the relevance of items. In this investigation, we consider exploiting a relatively comprehensive but very simple indicator to reflect relevance, that is, the expected merchandise volume, or the expected sale amount to rank the organic items with two reasons. The first reason is that the expected merchandise volume includes the user's interest in browsing the product (click rate), the detailed purchase tendency (conversion rate), and the decision after the balance between price and product quality. In another word, the merchandise volume can directly or indirectly reflect the user experience. The second reason is that, nowadays, most e-commerce platforms regard the gross merchandise volume (GMV) as a criterion to measure how much they take over the whole market, and compete for GMV during some shopping ceremonies.

However, due to some specialty of e-commerce platforms, like that the sponsored advertisements are also the organic items essentially, still insisting on the conventional pattern and regarding sponsored advertisements and organic results as independent sections may expose some downsides:
\begin{enumerate*}[label = (\alph*), font = {\bfseries}]
\item {obviously, it is not appropriate for different keywords to use the same number of positions providing for ads. The conventional pattern cannot automatically solve the above problem;}

\item {to optimize the user experience, the platform will list the organic results according to their expected transaction volume or their expected sale quantities (these volumes or quantities are well known by the platforms), i.e., the item with the highest expected ``volume''  will be shown in the first slot (exclude the slots for advertisements), the item with the second highest expected ``volume'' will be shown in the second slot, and so on. Nevertheless,  during the sponsored search auction, if some items with low volume would rather pay high price to get more clicks, by the goal of sponsored search auction, maximizing the instant revenue, these items will be picked up, which will harm the platforms' expected GMV and the long-run profits.}
\end{enumerate*}
In view of the above points, is it still a great idea to fix the ad positions in advance? Can the platforms do better if they weighing the allocation of the sponsored advertisements and the organic results at the same time, not separately? If yes\footnote{We give an example to illustrate that the GMV and revenue will be better if we consider organic results and sponsored advertisements at the same time in appendix}, how to balance the two conflicting objectives of user experience (i.e., GMV) and instant advertising revenue in designing mechanisms? Furthermore, how to design the optimal mechanisms with the integrated allocation?

In reality, on some platforms such as Tmall, Taobao etc, the advertisements and organic items have already been combined together to display. However, the currently used mixed allocation rules are still very heuristic. For example, always take out one slot for advertisement of every $N$ displaying slots. Motivated by this, in this work, we will, from the theoretical aspect, try to study how to design truthful mechanisms to optimally trade off the expected revenue and volume for this new integrated setting (answering the third question and fourth question proposed above). To the best of our knowledge, our work is the first attempt on studying auction mechanism design with multiple objectives in the new layout of mixed arrangements.

\subsection{Our Contribution}
In this paper, our contributions can be summarized as follows.
\begin{itemize}
\item \textbf{Novelty}---We initially build up an integrated model to describe the layout of mixed arrangements of organic results and sponsored advertisements, called an integrated ad system (IAS), where two main objectives, revenue and GMV, are considered. 
    We propose two general kinds of problems with two different ways to trade off the revenue and GMV: unconstrained  problem of linearly combining the volume and revenue as a single objective; constrained problem of bounding the volume and maximizing the revenue (see Section \ref{CorePro}). 
\item \textbf{Techniques}---For the unconstrained  problem,  we prove that the optimal truthful mechanism (Theorem \ref{thm1}) can be obtained by a transformation on objectives and then allocating the items by their ``revised virtual values''. For the constrained problem, using variables relaxation and Lagrangian dual methods,  we show that the constrained problem can be  transformed into an unconstrained problem by properly choosing a parameter (Theorem \ref{thm:rlt}). Then we give a numerical algorithm (Algorithm \ref{alg:algorithm}) to compute the desired parameter, thus deriving the optimal truthful mechanism for the constrained problem.
\item \textbf{Extensions}---We extend our study to three general settings with practical restrictions: requiring an upper bound of the total number of advertisements, and requiring certain sparsity of the allocation of advertisements where for the later, we divide it into two models with different restriction on the sparsity. However, under these restrictions, some good properties will not exist anymore, but by some more delicate treatments, we can still  design the optimal mechanisms for both constrained and unconstrained problems.
\item \textbf{Practicability}---We first verify that our mechanisms can be implemented as planned. Then comparing with the currently used mechanism, we show the superiority of our mechanisms. At last, we take the correlation between value and weight of advertisers into account, our mechanisms still perform better.
\end{itemize}

\subsection{Related Work}

\noindent\textbf{Sponsored Search Auction.}
Auction lies in the core of mechanism design research, while sponsored search auction has been an area of great focus in computer science in the last twenty years.  For the traditional revenue optimization, \cite{Myerson} solved the problem for single item in Bayesian-Nash equilibrium setting. \cite{Mask} studied multi-unit optimal auction. From the first launch of sponsored search auction in 1997, a series of auction mechanisms are put forward, like generalized first price(GFP) auction by Overture and generalized second price(GSP) by Google, which makes sponsored search auction become a vital incoming source of variable online platforms in practise. \cite{GAR06} showed that the lower bound of revenue produced by GSP, the most popular mechanism is equal to the revenue of classic VCG mechanism (\cite{VK61,CLK71,GRV73}). Due to the untruthfulness of GSP, \cite{EOS07} and \cite{VR07} independently examined the behavior of bidders and proposed locally envy-free equilibrium(LEFE) and symmetric Nash equilibrium (SNE), respectively. Another idea of a squashing parameter of GSP to improve the revenue was proposed by \cite{Lah}.\cite{Ostr} applied these works and studied its effects on Yahoo! auction using the optimal reserve price. \cite{Thom} also studied several different techniques for increasing the revenue, including via several different reserve prices, squashing, combinations of a reserve price and squashing, ect. Rather than using the reserve price simply as a minimum bid, \cite{Rob} presented the idea of incorporating the reserve price into ranking score and showed that this mechanism may increase the revenue compared to the squashing mechanism.

For the topic of user experience, \cite{Abr} introduced a concept of hidden cost, which was advertiser-specific and represented the quality of the ads' landing page. \cite{Ath} studied a model that introduced the users' search cost, and showed that reserve price can improve users' welfare. \cite{Li} defined the shadow cost, which is revenue reduction in long run and depends on both advertiser and slot.

As for the trade-offs among different objectives, there are also some related studies. \cite{Lik} first studied the problem of designing optimal mechanisms to balance revenue and welfare. \cite{Sun} considered the convex combination of revenue and welfare to improve the prediction. \cite{Bac} applied the linear combination of different objectives as their objective function. \cite{Shen} introduced a class of parameterized mechanisms to balance different objectives.

\noindent\textbf{Rank of organic results.}
Besides the methods (\cite{AL04,AU06,WH10}) used in practise, there are still multiple theoretical investigations about ranking organic results, recently. \cite{CRW07} first incorporated bias of search results into consideration, when ranking the items. In the next few years, this topic is still studied by many researchers, e.g., \cite{EL11,WR12,MT14}. \cite{LMST17} showed how to optimally rank search results to maximize an objective that combines search-result relevance and sales revenue. \cite{CNZ17} studied the optimal ranking rule of  multiple objectives includes consumer and seller surplus, as well as the sales revenue, taking consumers' choice into consideration.

However, all these work treats the sponsored search auction separately and does not consider the mixed arrangement of organic results and advertisements. We design an integrated system that takes both organic items and paid advertisements into consideration together.

\noindent\textbf{Our organizations.}
In Section \ref{sec:pre}, we give all the necessary notations and definitions, and build up the modelling framework of the integrated ad system. Two core optimization problems are also proposed in this section. In Section \ref{sec:opt}, we consider designing optimal mechanisms for these two problems. In Section \ref{sec:ext}, we generalize our study to more practical settings and design the corresponding optimal mechanisms. In Section \ref{sec:exp}, we run several numerical experiments to testify the performance of our mechanisms. Finally, we summarize our works and put forward open problems in Section \ref{sec:conclusion}. 

\section{Notations and Preliminaries}
\label{sec:pre}
\subsection{Integrated Ad System}\label{pre-IAAS}
In the classic ad auction,  advertisers bid for the keywords, and the pre-set advertising slots, which are separated from organic search result, are allocated to the ads with the highest bids. In this paper, we propose a new model called \emph{integrated ad system} (IAS) which caters for modern e-commerce online platforms, where both the positions and total number of advertising slots are not determined in advance. After collecting bids from advertisers, the platform decides how to rank all items (including sponsored advertisements and organic results) and how much each advertiser should pay.

We start with a formal description of the IAS. In an IAS, there are $n_1$ advertisers and $n_2$ organic results competing for $K$ available slots of one search-result page simultaneously. 
For simplicity, we call advertisements and organic results as ad items and organic items, respectively. Denote $A$ as the set of ad items and $O$ as the set of organic items, and exploit $i \in A$ or $i \in O$ to represent an ad item or an organic item. Let $k \in \{1, 2, \ldots, K\}$ index the slots. Generally, a higher position slot has a smaller index. For the $k$th slot, $\beta_k$ stands for its effective exposure, which means the probability that one user pays attention to this slot. Without lose of generality, assume that $\beta_1 > \beta_2 >\ldots> \beta_K >0$.

Each (ad or organic) item $i$ has a quality (weight) factor $w_i$ to reflect its relative popularity compared to other items. In addition, each item $i$'s estimate merchandise volume per click is $g_i$ that is well known by the platform and advertisers. In detail, the estimate merchandise volume is an attribute that represents the expected sale amount per click. For ease of representation, we call $g_i$ volume, instead. Each ad item $i$ still has an extra private value $v_i$ per click. More specifically, $v_i$ means that the advertisers want to charge extra $v_i$ to gain one click. Suppose that $v_i$ is independently (not necessarily identical) drawn from $[0, u_i]$ according to a publicly known distribution $F_i(v_i)$ whose respective pdf is $f_i(v_i)$. Given this, the virtual value of ad item $i$ can be represented as $\phi_i(v_i)=v_i-\big(1-F_i(v_i)\big)/f_i(v_i)$. We assume that the distribution $F_i(v_i)$ satisfies the regular condition, which implies that $\phi_i(v_i)$ is monotone non-decreasing. For the organic items, we have the following assumption:
\begin{assumption}
For any organic item $i$, we assume its valuation is always 0, i.e., $v_i$ is drawn from the degenerate distribution with one support point 0. \footnote{This assumption can be understood as that any organic items have no intention to charge extra money for click numbers. They do not submit any price to platform and make no contribution to the instant revenue.}.
\end{assumption}
In the IAS, we consider the separable click-through-rate(CTR) model. That is, if the item $i$ is allocated to slot $k$, the item $i$'s CTR is $w_i\beta_k$. It means that if we put an item into a slot, the probability of being clicked is affected not only by the position but also by itself. Similarly, the corresponding merchandise volume is $g_iw_i\beta_k$.

For convenience, we summarize the above symbols in Table \ref{SYB} for easily reading. In summary, the whole process of IAS can be described as: 1. The platform releases the information of slots; 2. The advertisers submit a bid price based on the value; 3. The platform ranks all ad items and organic items simultaneously and decides how much the advertisers should charge.
\begin{table}[ht!]
    \begin{center}
        \begin{tabular}{l | l}
            \hline
            $n_1$ & the number of advertisers \\ \hline
            $n_2$ & the number of organic items \\ \hline
            $i$   & one advertisement or one organic item \\ \hline
            $k$   & one slot \\ \hline
            $w_i$ & the weight of ad or organic item $i$ \\ \hline
            $v_i$ & the value of ad or organic item $i$ \\ \hline
            $F_i(v_i)$ & the cdf of $v_i$ \\ \hline
            $f_i(v_i)$ & the pdf of $v_i$ \\ \hline
            $\phi_i(v_i)$ & the virtual value of ad or organic item $i$ \\ \hline
            $g_i$ & the estimate merchandise volume per click of ad or organic item $i$ \\ \hline
            $\beta_k$ & the effective exposure of slot $k$ \\ \hline
        \end{tabular}
        \caption{The Symbols of Integrated Ad System}
        \label{SYB}
    \end{center}
\end{table}

\subsection{Mechanism Design}\label{pre-md}
Let $v=(v_1, v_2,\ldots, v_{n_1})$ and $b=(b_1, b_2, \ldots, b_{n_1})$ be the value profile and the bid profile of all ad items.  We may use $v_{-i}$ and $b_{-i}$ to represent the value profile and bid profile of all ad items except ad item $i$.

In the IAS, after receiving the bids from advertisers, a \emph{mechanism} $\mathcal{M}=(x(b),p(b))$ consists of two rules, allocation rule $x(b)$ and payment rule $p(b)$. More specifically, $x(b)=(x_1(b), x_2(b), \ldots,$ $ x_{n_1+n_2}(b))$\footnote{In the IAS, since we allow that ad items and organic items display in a mixture configuration, the outcome of an mechanism decides how to allocate both two types of items, simultaneously.} and $p(b)=(p_1(b), p_2(b), $ $\ldots, p_{n_1}(b))$, where $x_i(b)=\sum_{k=1}^K x_{ik}(b)\beta_{k}$ and $x_{ik}(b)$ is the indicator function to imply whether item $i$ is assigned to slot $k$. Note that the CTR of item $i$ is $w_i x_i(b)$.

Since one slot should be allocated to one item and one item should be assigned to at most one slot, $x_{ik}(b)$ must satisfy the following constraints, denoted by $\mathcal{X}$:
\begin{align*}
&\sum_{k}x_{ik}(b)\leq 1, &\forall i\in A ~\text{or}~  O,\\
&\sum_{i\in A}x_{ik}(b)+\sum_{i\in O}x_{ik}(b) =1,  &\forall k,\\
&x_{ik}(b)\in \{0,1\}, &\forall i, k.
\end{align*}

Given allocation rule and payment rule, the expected utility of ad item $i$ with bid $b_i$, can be expressed as:
\begin{equation}
U_i(x,p,b_i)=\int_{V_{-i}}\big(v_i-p_i(b_i, v_{-i})\big) w_i x_i(b_i, v_{-i})f_{-i}(v_{-i})dv_{-i}
\end{equation}

In the IAS, to guarantee that advertisers have a desire to participate in the auction, we require that the utility of ad item should not be less than zero.

\begin{definition}[Individual Rationality]
\begin{equation*}
U_i(x,p,b_i) \geq  0, \quad\quad \forall b_i\in [0,u_i], \quad \forall i\in A.
\end{equation*}
\end{definition}

In this paper, we also hope to design mechanisms that avoid advertisers false reporting. Since our model is in Bayesian setting, we consider the mechanisms with Bayesian Incentive Compatibility (BIC).
\begin{definition}[Bayesian Incentive Compatibility]
\begin{equation*}
U_i(x,p,v_i)\geq U_i(x,p,b_i), \quad\quad \forall b_i\in [0, u_i], \quad \forall i\in A.
\end{equation*}
\end{definition}
We call a mechanism \emph{feasible} iff it is IR, BIC and satisfies the condition $\mathcal{X}$. 
Fortunately, \cite{Myerson} has already given the equivalent characterization of IR, BIC mechanisms. With a little refinement, we can design mechanisms with IR and BIC in IAS as the following.
\begin{lemma}[Myerson 1981]\label{lem-IC}
A mechanism is IR, BIC if and only if, for any ad item $i$ and bids of other items $b_{-i}$ fixed,\\
1. $x_i(b_i,b_{-i})$ is monotone non-decreasing on $b_i$.\\
2. $p_i(b)=b_i-(\int_{0}^{b_i}x_i(s_i ,b_{-i})ds_i)/x_i(b)$, when $x_i(b)\neq 0$; Otherwise, $p_i(b)= 0$.
\end{lemma}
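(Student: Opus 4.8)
The plan is to treat each ad item $i$ as a single-parameter agent whose type is its per-click value $v_i$, and to run the classical Myerson argument while tracking the weight $w_i$ and the slot structure baked into $x_i(b)=\sum_k x_{ik}(b)\beta_k$. Throughout I fix the bids $b_{-i}$ of the other items and write $u_i(v_i,b_i)=\int_{V_{-i}}\big(v_i-p_i(b_i,v_{-i})\big)w_i x_i(b_i,v_{-i})f_{-i}(v_{-i})\,dv_{-i}$ for the interim utility of an item whose true value is $v_i$ but who reports $b_i$ (this disambiguates the paper's notation $U_i(x,p,\cdot)$, in which the true value sits inside the integrand while the reported bid is the argument). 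In this language BIC reads $u_i(v_i,v_i)\ge u_i(v_i,b_i)$ for all $b_i$, and IR reads $u_i(v_i,v_i)\ge 0$.

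For the ``if'' direction I would argue pointwise in $v_{-i}$, since both hypotheses are pointwise. Substituting the payment rule of condition 2 into the integrand collapses the per-profile quantity to $w_i\big[(v_i-b_i)x_i(b_i,v_{-i})+\int_0^{b_i}x_i(s,v_{-i})\,ds\big]$. Comparing a report $b_i$ with the truthful report, the gain from truth-telling equals $w_i\big[\int_{b_i}^{v_i}x_i(s,v_{-i})\,ds-(v_i-b_i)x_i(b_i,v_{-i})\big]$, and monotonicity of $x_i(\cdot,v_{-i})$ makes this nonnegative in both cases: when overbidding one bounds $x_i(s)\le x_i(b_i)$ on the relevant interval, and when underbidding one bounds $x_i(s)\ge x_i(b_i)$. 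Integrating over $v_{-i}$ then yields BIC; evaluating the collapsed expression at $b_i=v_i$ gives $w_i\int_0^{v_i}x_i(s,v_{-i})\,ds\ge 0$, which integrates to IR.

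For the ``only if'' direction I would first extract monotonicity from the two BIC inequalities obtained by swapping two types $v_i,v_i'$; adding them gives $(v_i-v_i')\big(\bar x_i(v_i)-\bar x_i(v_i')\big)\ge 0$ for the interim weighted allocation $\bar x_i(v_i)=\int_{V_{-i}}w_i x_i(v_i,v_{-i})f_{-i}(v_{-i})\,dv_{-i}$. I would then recover the payment rule by the envelope identity: BIC forces the equilibrium utility $U_i(v_i)=u_i(v_i,v_i)$ to satisfy $U_i(v_i)=U_i(0)+\int_0^{v_i}\bar x_i(s)\,ds$, while IR together with the minimality of the lowest type pins $U_i(0)=0$. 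Matching this against $U_i(v_i)=v_i\bar x_i(v_i)-\bar\pi_i(v_i)$ for the expected payment $\bar\pi_i$ and disintegrating over $v_{-i}$ reproduces exactly the formula of condition 2 (with $p_i\equiv 0$ on the zero-allocation event, where the integral vanishes).

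The main obstacle, and the place where ``a little refinement'' is needed, is that BIC by itself only delivers these conditions at the interim level, i.e. monotonicity of $\bar x_i$ and an identity for the expected payment, whereas the lemma states them pointwise in $b_{-i}$. I would bridge this gap by noting that the pointwise conditions are precisely what we impose on the mechanisms we later construct, so the sufficiency direction is the operative one; for the converse I would either restrict attention to mechanisms whose payment is already in the pointwise Myerson form, or strengthen the requirement to dominant-strategy incentive compatibility, under which the single-profile argument of the second paragraph applies verbatim for each fixed $b_{-i}$ and forces both conditions pointwise.
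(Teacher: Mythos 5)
The paper does not actually prove this lemma: it is invoked as a citation to Myerson (1981), with only the remark that ``a little refinement'' adapts it to the IAS. Your reconstruction is, in substance, Myerson's own argument, and it is correct. The sufficiency direction is airtight: substituting the payment rule collapses the per-profile utility to $w_i\bigl[(v_i-b_i)x_i(b_i,v_{-i})+\int_0^{b_i}x_i(s,v_{-i})\,ds\bigr]$, and bounding $x_i(s)$ by $x_i(b_i)$ on the interval between $b_i$ and $v_i$ (in the appropriate direction) gives BIC pointwise in $v_{-i}$, hence after integration; evaluating at $b_i=v_i$ gives IR. The necessity direction via the two-type swap (yielding interim monotonicity of $\bar x_i$) and the envelope identity is also the standard route, and your explicit flagging of the interim-versus-pointwise mismatch is the right catch: BIC genuinely only delivers monotonicity of the \emph{interim} allocation $\bar x_i$ and an identity for the \emph{expected} payment, so the lemma's pointwise phrasing is strictly a characterization of DSIC mechanisms (or a convenient canonical form), and the paper's use of it is legitimate precisely because, as you note, only the sufficiency direction is operative --- the paper constructs mechanisms already in pointwise Myerson form and needs them to be feasible, plus the payment-identity consequence for revenue (used in Lemma 2 of the paper via $U_i(x,p,0)$).

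One small imprecision in your necessity sketch: IR alone does not pin $U_i(0)=0$; it only gives $U_i(0)\ge 0$, since one can subtract a bid-independent constant from $p_i$ and preserve both BIC and IR. What forces $U_i(0)=0$, and hence the exact formula of condition 2, is the additional no-subsidy normalization implicit in setting $p_i(b)=0$ on the zero-allocation event (equivalently, nonnegative payments for the zero type). This is the same standard sloppiness already present in the lemma's statement, so it does not undermine your proof, but the phrase ``minimality of the lowest type pins $U_i(0)=0$'' should be replaced by an appeal to that normalization.
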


In the following of this paper, we only concentrate on designing feasible mechanisms and use $v_i$ to represent the bid price $b_i$ directly for convenience.

\subsection{Core Problems}
\label{CorePro}
As mentioned before, e-commerce platforms, unlike conventional search engines, are concerned about instant revenue from ad items and GMV from both ad items and organic items. Recall the notations in Subsection \ref{pre-IAAS} and \ref{pre-md}, the revenue and GMV are respectively given by
\begin{equation}\label{rev}
\text{Revenue} = \int_V \sum_{i\in A} p_i(v) w_i x_i(v) f(v)  dv
\end{equation}
and
\begin{equation}\label{vol}
\text{GMV} = \int_V \Big[\sum_{i\in A} g_i w_i x_i(v)+\sum_{i \in O} g_i w_i x_i(v)\Big] f(v)  dv.
\end{equation}

In this subsection, we put forward two different approaches to trade off the two objectives. The first approach is to think about the linear convex combination of the revenue and the GMV directly. Specifically, given a coefficient $\alpha \in [0,1]$ in advance, our problem is to find an optimal mechanism to maximize their convex combination with $\alpha$. We call this problem as \emph{Unconstrained Problem}.
\paragraph{Unconstrained Problem $UCST(\alpha)$}
Given the weighted coefficient $\alpha \in [0,1]$, the unconstrained problem $UCST(\alpha)$ can be written as
\begin{equation}{\label{defi-ucst}}
\max \limits_{x\in\mathcal{X}} \quad \alpha \cdot \text{Revenue} +(1-\alpha)\cdot \text{Volume}.
\end{equation}

The second approach is that we optimize one metric while restricting the other metric. This problem is called \emph{Constrained Problem}. In this paper, we always put GMV into constraints and optimize the revenue. Hence, in this problem, a threshold of GMV, denoted by $V_0$, is always given in advance.
\paragraph{Constrained Problem $CST(V_0)$}
Given a threshold $V_0$, the constrained problem $CST(V_0)$ can be written as
\begin{align*}
\label{cp-prog}\tag{P1}
\max  & \quad\quad \text{Revenue}  \\
\tag{C1.1}
\rm{s.t.}  & \quad\quad \text{Volume} \geq V_0\\
\label{cp-c1.2}\tag{C1.2}
 & \quad \quad x\in \mathcal{X}
\end{align*}

\section{The Optimal Mechanisms for the IAS}
\label{sec:opt}
In this section, we mainly explore the optimal feasible mechanisms for both unconstrained problem and constrained problem. We start with the unconstrained problem and present the elegant optimal mechanism. 

\subsection{The Unconstrained Problem}
\label{sec:ust}

From our intuition, if there is no ad item, i.e., nobody submits a bid, the platform will rank organic items by their volumes naturally. 
Because the organic item will not influence the revenue, even if ad items appear, the order among organic items does not change, as well. Therefore, the problem of designing the optimal mechanism is, in fact, how to pick up appropriate ad items and insert them properly into the list of organic items to achieve the optimum of goals. More specifically, we need to design the optimal ranking criterion to incorporate the ad items without disturbing the order of organic items, and the corresponding payment function. For simplicity, we can define $I= A \cup O$ as the new set to represent all items. Formally, we also extend $v, b, p(b)$ to $n_1+n_2$ dimensions.

For the unconstrained problem with fully mixed allocation of the ads and organic items, we aim to find the criterion of rank to insert ad items into the sequence of organic items optimally. Because there is no other constraint except for the feasibility, we can simply the objective function to an easily observed form which can help us find the optimal ranking rule, shown in the Lemma \ref{lem1}.
\begin{lemma}\label{lem1}
To maximize the objective function (\ref{defi-ucst}), the mechanism needs to maximize the objective
\begin{equation}\label{eqn-ucst}
\int_V \sum_{i\in I} (\alpha \phi_i(v_i)+(1-\alpha)g_i) w_i x_i(v) f(v) dv-\alpha\sum_{i\in I}  U_i(x,p,0).
\end{equation}
\end{lemma}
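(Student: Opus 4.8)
The plan is to prove the statement as an \emph{identity}: I will show that for every feasible (IR, BIC) mechanism the two functionals (\ref{defi-ucst}) and (\ref{eqn-ucst}) are numerically equal, so that maximizing one is literally maximizing the other. Since the GMV term (\ref{vol}) is already linear in the allocation and needs no transformation, the entire content is a Myerson-style rewriting of the revenue term (\ref{rev}) in terms of virtual values, with careful bookkeeping of the lowest-type utility $U_i(x,p,0)$.

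First I would fix a single ad item $i\in A$ and express its expected payment. From BIC and the envelope condition one has $U_i(x,p,v_i)=U_i(x,p,0)+w_i\int_0^{v_i}\bar x_i(s)\,ds$, where $\bar x_i(s)=\int_{V_{-i}}x_i(s,v_{-i})f_{-i}(v_{-i})\,dv_{-i}$ is the interim allocation (monotone in $s$ by Lemma \ref{lem-IC}). Writing the revenue contribution as expected value minus expected utility, $\text{Revenue}_i=\int_V v_i w_i x_i(v)f(v)\,dv-\int_0^{u_i}U_i(x,p,v_i)f_i(v_i)\,dv_i$, I would substitute the envelope form and integrate by parts the resulting double integral $\int_0^{u_i}\big(\int_0^{v_i}w_i\bar x_i(s)\,ds\big)f_i(v_i)\,dv_i=\int_0^{u_i}(1-F_i(v_i))w_i\bar x_i(v_i)\,dv_i$; the boundary term vanishes because $F_i(u_i)=1$ and the inner integral is zero at $v_i=0$. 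Recognizing $\phi_i(v_i)=v_i-(1-F_i(v_i))/f_i(v_i)$, this yields the clean identity $\text{Revenue}_i=\int_V \phi_i(v_i)w_i x_i(v)f(v)\,dv-U_i(x,p,0)$. (Equivalently one can multiply the payment formula in Lemma \ref{lem-IC} through by $x_i(v)$, giving $p_i(v)x_i(v)=v_i x_i(v)-\int_0^{v_i}x_i(s_i,v_{-i})\,ds_i$, valid even when $x_i(v)=0$ since monotonicity forces the integrand to vanish there.) Summing over $i\in A$ gives $\text{Revenue}=\sum_{i\in A}\int_V\phi_i(v_i)w_ix_i(v)f(v)\,dv-\sum_{i\in A}U_i(x,p,0)$.

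Finally I would form $\alpha\cdot\text{Revenue}+(1-\alpha)\cdot\text{GMV}$ and reorganize. The GMV term already sums $g_iw_ix_i(v)$ over all of $I=A\cup O$. By Assumption 1 each organic item has $v_i\equiv 0$, makes no payment, and hence contributes nothing to revenue while satisfying $U_i(x,p,0)=0$; adopting the convention $\phi_i\equiv 0$ on $O$ lets me extend both the $\phi_i$-sum and the $U_i$-sum from $A$ to $I$ without altering their values. Collecting the coefficient of each $w_ix_i(v)$ then reproduces exactly the integrand $\alpha\phi_i(v_i)+(1-\alpha)g_i$ of (\ref{eqn-ucst}), together with the trailing $-\alpha\sum_{i\in I}U_i(x,p,0)$, completing the identity.

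The routine calculus (integration by parts, Fubini) is standard; the points that deserve care are genuinely two. The first is retaining the $U_i(x,p,0)$ term rather than silently setting it to zero as in the textbook optimal auction: keeping it makes (\ref{eqn-ucst}) an identity for any IR, BIC mechanism and transparently shows that, since $\alpha\ge 0$ and IR forces $U_i(x,p,0)\ge 0$, the subsequent optimization will drive this subtracted term to zero (which the payment rule of Lemma \ref{lem-IC} indeed achieves). The second is the clean handling of the degenerate organic distributions through the $\phi_i\equiv 0$ convention, which is what allows the sums over $A$ to be rewritten as sums over $I$ and is the step that fuses the ad-side virtual-value objective with the volume objective into a single rankable score.
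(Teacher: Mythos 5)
Your proof is correct and follows essentially the same route as the paper: plug the revenue and GMV expressions into the combined objective, extend the sums from $A$ to $I$ using the degeneracy of organic items' values, and apply Myerson's virtual-value identity while retaining the $-\alpha\sum_i U_i(x,p,0)$ term. The only difference is that you derive the Myerson step (envelope condition plus integration by parts) explicitly, whereas the paper simply cites it, so your write-up is a more self-contained version of the same argument.
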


\begin{proof}
Plugging (\ref{rev}) and (\ref{vol}) into (\ref{defi-ucst}), we can rewrite the objective function as
\begin{align}
\text{OBJ} &= \alpha\cdot \text{Revenue} +(1-\alpha) \text{Volume} \nonumber \\
    &\overset{(\dag)}{=} \alpha\left(\int_V \sum_{i\in I} p_i(v) w_i x_i(v) f(v)  dv\right)+(1-\alpha)\int_V \sum_{i\in I} g_i w_i x_i(v)f(v)  dv \nonumber \\
    &\overset{(\ddag)}{=} \alpha\left[-\sum_{i\in I}  U_i(x,p,0)+\int_V\sum_{i\in I} \phi_i(v_i) w_i x_i(v)f(v)  dv\right] \nonumber\\ &+(1-\alpha)\int_V \sum_{i\in I} g_i w_i x_i(v)f(v)  dv \nonumber \\
    &= \int_V\sum_{i\in I} w_i (\alpha\phi_i(v_i)+(1-\alpha)g_i) x_i(v)f(v)  dv -\alpha\sum_{i\in I}  U_i(x,p,0).
\end{align}
where $(\dag)$ follows from that organic items pay nothing to platforms, and $(\ddag)$ is the previous conclusion from \cite{Myerson}.
\end{proof}

Because $p(b)$ only appear in $U_i(x,p,0)$ and we need to guarantee the property of IR, by Lemma \ref{lem-IC}, if we choose
\begin{equation}\label{pym-rule}
p_i(v)=v_i-\frac{\int_{0}^{v_i}x_i(s_i ,v_{-i})ds_i}{x_i(v)}
\end{equation}
as payment rule, then $U_i(x,p,0)$ will be 0 and we only need to consider to select appropriate $x(b)$ to maximize the first part of (\ref{eqn-ucst}).

As for allocation rule, we define $\psi_i(v_i) \overset{def}{=} (\alpha\phi_i(v_i)+(1-\alpha)g_i) w_i$ as the revised virtual value of item $i$. It is not difficult to check that the following allocation rule can maximize the first part of formula (\ref{eqn-ucst}).
\begin{equation}\label{alc-rule}
x_{ik}(v)=
\begin{cases}
1 & \text{if } \psi_i(v_i) \text{ is the } k\text{th-highest revised virtual value}, k\in\{1,2,\ldots, K\}, \\
0 & \text{otherwise.}
\end{cases}
\end{equation}
Note that since $\phi_i(v_i)$ is regular, $\psi_i(v_i)$ is non-decreasing as well. Consequently, $x_i(v)$ satisfies the non-decreasing, when fix $v_{-i}$.


By the argument above, we can obtain the optimal mechanism for unconstrained problem by Theorem \ref{thm1}.
\begin{theorem}\label{thm1}
Given $\alpha\in [0,1]$, the mechanism $\mathcal{M}=(x(v), p(v))$, where $x(v)$ and $p(v)$ are described by (\ref{alc-rule}) and (\ref{pym-rule}), is the optimal feasible mechanism for the unconstrained problem $UCST(\alpha)$.
\end{theorem}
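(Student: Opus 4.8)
The plan is to combine the objective reduction from Lemma~\ref{lem1} with a pointwise maximization argument and Myerson's characterization (Lemma~\ref{lem-IC}), showing that the candidate mechanism simultaneously attains the best possible value of each of the two terms in (\ref{eqn-ucst}) over the \emph{entire} feasible class. First, by Lemma~\ref{lem1}, for any feasible mechanism the value of the original objective (\ref{defi-ucst}) equals
\begin{equation*}
\int_V \sum_{i\in I} \psi_i(v_i)\, x_i(v) f(v)\, dv \;-\; \alpha\sum_{i\in I} U_i(x,p,0),
\end{equation*}
where $\psi_i(v_i)=(\alpha\phi_i(v_i)+(1-\alpha)g_i)w_i$. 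Since $\alpha\ge 0$ and individual rationality forces $U_i(x,p,0)\ge 0$, the second term is at most $0$ for every feasible mechanism, so the objective is bounded above by the first (allocation) term alone. I would therefore split the proof into: (i) showing the first term is maximized pointwise by (\ref{alc-rule}); (ii) showing the payment rule (\ref{pym-rule}) makes the second term vanish while preserving feasibility; (iii) checking the allocation is monotone so that the whole mechanism is feasible and the pointwise bound is actually attained.

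For step (i), fix a realization $v$ and observe the integrand is $\sum_{i} \psi_i(v_i)\sum_k x_{ik}(v)\beta_k$, a sum over the $K$ slots of $\beta_k$ times the revised virtual value of whichever item occupies slot $k$. Because every slot must be filled ($\sum_i x_{ik}=1$) and each item occupies at most one slot, maximizing this is an assignment problem with positive slot weights $\beta_1>\cdots>\beta_K>0$. By the rearrangement inequality the optimum selects the $K$ largest values of $\psi_i(v_i)$ and pairs them with the slots in decreasing order of $\beta_k$, i.e. the highest revised virtual value goes to slot $1$, the second to slot $2$, and so on --- which is exactly (\ref{alc-rule}). As this is optimal for every $v$ separately, it maximizes the integral.

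For steps (ii) and (iii), I would invoke Lemma~\ref{lem-IC}. Regularity of $F_i$ makes $\phi_i$ non-decreasing, hence $\psi_i(v_i)$ is non-decreasing in $v_i$ (as $w_i>0$ and $\alpha\ge 0$); raising $b_i$ with $b_{-i}$ fixed can only weakly improve item $i$'s rank under (\ref{alc-rule}), so $x_i(b_i,b_{-i})$ is monotone non-decreasing, verifying condition~1. The payment (\ref{pym-rule}) is precisely the Myerson payment of condition~2, so the mechanism is IR and BIC; it satisfies $\mathcal{X}$ by construction, hence is feasible. Moreover this payment yields $U_i(x,p,0)=0$, driving the second term of (\ref{eqn-ucst}) to its maximal value $0$. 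Combining: for any feasible mechanism the objective is at most the first term under \emph{its} allocation, which is at most the first term under (\ref{alc-rule}) by step~(i); our mechanism attains this upper bound and also zeroes the second term, so it is optimal.

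I expect the only genuinely delicate point to be the compatibility between pointwise optimality and implementability in step~(iii): the pointwise optimizer is a hard ranking, and a priori the resulting $x_i$ need not be monotone in $b_i$, in which case no feasible mechanism could realize the pointwise bound and ironing would be required. The regularity assumption is exactly what rules this out --- it guarantees $\psi_i$ is monotone, so the greedy ranking is monotone and the pointwise upper bound is attained by a genuinely BIC mechanism. Everything else (the rearrangement step and the vanishing of the payment term) is routine.
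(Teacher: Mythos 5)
Your proposal is correct and follows essentially the same route as the paper: the reduction via Lemma~\ref{lem1}, zeroing the $U_i(x,p,0)$ term with the Myerson payment (\ref{pym-rule}), pointwise maximization of the revised-virtual-value term by the ranking rule (\ref{alc-rule}), and monotonicity of $\psi_i$ from regularity to invoke Lemma~\ref{lem-IC}. You merely spell out details the paper leaves implicit (the rearrangement argument and the observation that IR bounds the payment term by zero), so there is nothing to correct.
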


\begin{proof}
By Lemma \ref{lem-IC}, $x_i(b)$ and $p_i(b)$ decide the feasibility of this mechanism. Meanwhile, for any bid profile $b$, $x_i(b)$ and $p_i(b)$ also achieve maximum of objective, due to Lemma \ref{lem1}. Therefore, this mechanism is the optimal feasible mechanism.
\end{proof}


The optimal mechanism of the unconstrained problem has been given above. We will explain more about the optimal mechanism. For the special case that all items are organic items, the platform should rank them by their volumes. 
Our mechanism exactly has the same rank as this, because our mechanism ranks items by $\psi_i(v_i) = (\alpha\phi_i(v_i)+(1-\alpha)g_i) w_i= (1-\alpha)g_i w_i$. It is equivalent to ranking them by their volume. Based on this point of view, we can regard the rank of general case in another perspective. For ad items, the term $\phi_i(v_i)$ is different from that of organic item. In the same ranking criterion, we can rank the organic items first, then insert ad items optimally by ranking score $\psi_i(v_i)$. Another fact is that there exists no redundant slots that no items are assigned into. This is because, for the organic items, their revised virtual value is always greater than zero, which makes that, at least, the organic items can be shown in the result list. This point is also the difference with the classic Myerson auction which may not allocate items. In another words, our mechanism is based on the structure of Myerson auction, but allocation result is different. Finally, for the organic items, they also have the payment rule $p(b)$, but we can find that this value will be always equal to 0, which meets the practical requirement.

\subsection{The Constrained Problem}
\label{sec:cst}
In this subsection, we mainly concentrate on the constrained problem. Since the feasible region is discrete, i.e., $x_{ik}(v)$ is not a continuous function about bids $v$, it is difficult to optimize the objective directly. We first relax constraints $x_{ik}(v)\in  \{0, 1\}$ in (\ref{cp-c1.2}) to $x_{ik}(v)\in [0, 1]$ and denote the relaxed feasible region as $\bar{\mathcal{X}}$. 
The relaxed constrained problem is
\begin{align*}
\label{rcp-prog}\tag{P2}
\max  & \quad\quad R(x)\overset{def}{=}\int_V \sum_{i\in I} w_i p_i(v) x_i(v)f(v) dv  \\
\label{rcp-c2.1}\tag{C2.1}
\rm{s.t.}  & \quad\quad V(x)\overset{def}{=}\int_V \sum_{i\in I}  g_i w_i x_i(v)f(v)dv \geq V_0 \\
\label{rcp-c2.2}\tag{C2.2}
 & \quad\quad x\in \bar{\mathcal{X}}.
\end{align*}
Given this, we can prove that Program (\ref{rcp-prog}) is a convex optimization
and satisfies the strong duality, 
given an accessible $V_0$. 

\begin{lemma}\label{lem2}
The Program (\ref{rcp-prog}) is a convex optimization problem.
\end{lemma}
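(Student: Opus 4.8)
The plan is to verify the two defining ingredients of a convex program for (\ref{rcp-prog}): that its feasible region is a convex set, and that the objective $R(x)$ is a concave—in fact linear—functional of the allocation variables $x_{ik}(v)$. Since maximizing a concave functional over a convex set is by definition a convex optimization problem, establishing these two facts suffices.

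First I would handle the feasible region. Every constraint defining $\bar{\mathcal{X}}$, namely $\sum_k x_{ik}(v)\le 1$, $\sum_i x_{ik}(v)=1$, and $0\le x_{ik}(v)\le 1$, is affine in the variables $x_{ik}(v)$, so $\bar{\mathcal{X}}$ is an intersection of half-spaces and hyperplanes and hence convex. For the volume constraint (\ref{rcp-c2.1}), observe that $x_i(v)=\sum_k x_{ik}(v)\beta_k$ is linear in the $x_{ik}(v)$, so $V(x)=\int_V\sum_i g_i w_i x_i(v)f(v)\,dv$ is a linear functional of $x$ and $\{x: V(x)\ge V_0\}$ is a half-space. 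The feasible region, being the intersection of these convex sets, is convex.

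The crux is the objective $R(x)$, which at first sight appears nonlinear because the payment rule (\ref{pym-rule}) carries $x_i(v)$ in its denominator. The key step is to substitute the Myerson payment identity and observe the cancellation: since $p_i(v)=v_i-\int_0^{v_i}x_i(s_i,v_{-i})\,ds_i/x_i(v)$, one has
\begin{equation*}
p_i(v)\,x_i(v)=v_i\,x_i(v)-\int_0^{v_i}x_i(s_i,v_{-i})\,ds_i,
\end{equation*}
where on the set $\{x_i(v)=0\}$ we use the convention $p_i(v)=0$, consistent with the monotonicity guaranteed by Lemma \ref{lem-IC}. Substituting this product into $R(x)$ eliminates the denominator, leaving
\begin{equation*}
R(x)=\int_V\sum_{i\in I} w_i\Big(v_i\,x_i(v)-\int_0^{v_i}x_i(s_i,v_{-i})\,ds_i\Big)f(v)\,dv.
\end{equation*}
Both $v_i\,x_i(v)$ and the inner integral are linear in the $x_{ik}(v)$, so each summand is a linear integral transform of $x$ and $R(x)$ is a linear functional. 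A linear objective is in particular concave, so $R$ is concave over the convex feasible region, giving the claim.

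The main obstacle is precisely this denominator in the payment rule: one must recognize that it is the product $p_i(v)x_i(v)$, not $p_i(v)$ alone, that enters the objective, and that this product collapses to a linear expression. As an alternative route that sidesteps the boundary behavior on $\{x_i(v)=0\}$ entirely, one may instead invoke the revenue-equivalence form $R(x)=\int_V\sum_{i}\phi_i(v_i)w_i x_i(v)f(v)\,dv$, valid once $U_i(x,p,0)=0$ under (\ref{pym-rule}); this expression is manifestly linear in $x$. Either route establishes linearity of the objective and hence convexity of the program.
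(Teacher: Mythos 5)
Your proof is correct and takes essentially the same route as the paper's, which in one line asserts that $R(x)$ is linear in $x$, that (\ref{rcp-c2.1}) is a linear constraint, and that $\bar{\mathcal{X}}$ is a convex set. The extra work you do---showing the denominator in (\ref{pym-rule}) cancels in the product $p_i(v)x_i(v)$, with the boundary case $x_i(v)=0$ handled via the monotonicity from Lemma \ref{lem-IC} (or, alternatively, invoking the virtual-value form of the revenue)---is precisely the justification for linearity that the paper leaves implicit, and you carry it out correctly.
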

\begin{proof}
Since $R(x)$ is linear function of $x$ and (\ref{rcp-c2.1}) is linear constraint on $x$, and $\bar{\mathcal{X}}$ is convex set, the Program (\ref{rcp-prog}) is a convex optimization.
\end{proof}

\begin{lemma}\label{thm2}
If there exists a feasible allocation $x_0 \in \bar{\mathcal{X}}$, i.e., $V(x_0)\geq V_0$\footnote{The condition of Lemma \ref{thm2} can be checked by Mechanism $\mathcal{M}$ with $\alpha = 0$ which can find the maximum of GMV. On the other hand, if the threshold is so high that the maximum of GMV can not reach, then the $V_0$ will make no sense.}, The Program (\ref{rcp-prog}) satisfies strong duality property.
\end{lemma}
\begin{proof}
This lemma can be derived from the refined Slater's condition (\cite{boyd}).
\end{proof}
%
For an accessible $V_0$, strong duality holding implies that
\begin{align*}
& \underbrace{\max \limits_{x:x\in \bar{\mathcal{X}},V(x)\geq V_0}R(x)}_{\text{Primal}} = \underbrace{\min \limits_{\lambda\geq0} \max \limits_{x\in\bar{\mathcal{X}}}\big(R(x)+\lambda(V(x)-V_0)\big)}_{\text{Dual}}\\
& =\min \limits_{\lambda\geq0} \max \limits_{x\in\bar{\mathcal{X}}}\left[\int_V\sum_{i\in I} (p_i(b_i)+\lambda g_i) w_i x_i(b)f(b) db-\lambda V_0\right].
\end{align*}
Observing the inner maximization problem of the dual problem, when given a $\lambda$, it is an unconstrained problem with relaxed feasible region $\mathcal{\bar{X}}$. Since the coefficient matrix of $\bar{\mathcal{X}}$ is totally unimodular matrix (\cite{Papa}), the optimal solution of inner problem is integral form, i.e., 0-1 form. 
Therefore, the optimal solution of inner problem can be obtained by mechanism $\mathcal{M}$ with $\alpha=1/(\lambda+1)$, that is,
\begin{equation}\label{alc-rule-lambda}
x_{ik}^{\lambda}(v)=
\begin{cases}
1 & \text{if } \psi_i^{\lambda}(v_i) \text{ is the } k\text{th-highest}, k\in\{1,2,\ldots, K\},\\
0 & \text{otherwise,}
\end{cases}
\end{equation}
where $\psi_i^{\lambda}(v_i)\overset{def}{=}[(\phi_i(v_i)+\lambda g_i)w_i]/(\lambda+1)$. $p_i^{\lambda}(v)$ can be derived by replacing $x_{ik}(v)$ with $x_{ik}^{\lambda}(v)$ in (\ref{pym-rule}).

As for the outer problem, according to the complementary-slackness, the optimal $\lambda^*$ satisfies that $\lambda^*(V(x^{\lambda^*})-V_0)=0$. Hence, either $\lambda^*=0$ or $V(x^{\lambda^*})-V_0=0$ must hold. For the former case, we can testify whether $\lambda^*=0$ by running the optimal mechanism $\mathcal{M}$ in Theorem \ref{thm1} with $\alpha= 1$ and compare the output GMV with $V_0$. If the output GMV exceeds $V_0$, the optimal $\lambda^*$ is 0. Otherwise, we turn to the latter case. Because $V(x^\lambda)$ is monotone on $\lambda$ (shown by Lemma \ref{lem4}) and it only contains one variable $\lambda$, we come up with a Dichotomy, Algorithm \ref{alg:algorithm}, to output the optimal $\lambda^*$. 

\begin{lemma}\label{lem4}
$V(x^\lambda)$ is non-decreasing on $\lambda$.
\end{lemma}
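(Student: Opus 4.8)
The plan is to prove the monotonicity by a pointwise revealed-preference (comparative-statics) argument and then integrate. The intuition is that raising $\lambda$ puts more weight on the volume term $g_i$ in the ranking score, so higher-volume items get promoted and the realized GMV can only go up.

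First I would remove the irrelevant scaling. The factor $1/(\lambda+1)$ in $\psi_i^{\lambda}(v_i)=[(\phi_i(v_i)+\lambda g_i)w_i]/(\lambda+1)$ is common to all items and strictly positive, so it does not affect the ranking produced by (\ref{alc-rule-lambda}). Hence for every fixed value profile $v$ the allocation $x^{\lambda}(v)$ ranks items by the simpler score $s_i(\lambda)\overset{def}{=}(\phi_i(v_i)+\lambda g_i)w_i$ and assigns the $k$-th largest score to slot $k$. Since $\beta_1>\beta_2>\cdots>\beta_K>0$, this greedy assignment is exactly the maximizer over $\mathcal{X}$ of the linear objective $\sum_{i\in I}s_i(\lambda)\,x_i(v)=\sum_k\beta_k\sum_i s_i(\lambda)\,x_{ik}(v)$, by the standard rearrangement (sorting) argument that matches the largest scores to the largest exposures. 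Thus, pointwise in $v$, $x^{\lambda}(v)\in\arg\max_{x\in\mathcal{X}}\sum_{i\in I}(\phi_i(v_i)+\lambda g_i)w_i\,x_i(v)$.

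Next I would run the revealed-preference cancellation. Fix $\lambda_2>\lambda_1\geq 0$ and, for a fixed $v$, abbreviate $P_j=\sum_i\phi_i(v_i)w_i\,x_i^{\lambda_j}(v)$ and $G_j=\sum_i g_i w_i\,x_i^{\lambda_j}(v)$ for $j=1,2$. Optimality of $x^{\lambda_1}(v)$ at $\lambda_1$ and of $x^{\lambda_2}(v)$ at $\lambda_2$ yields the two inequalities $P_1+\lambda_1 G_1\geq P_2+\lambda_1 G_2$ and $P_2+\lambda_2 G_2\geq P_1+\lambda_2 G_1$. Adding them cancels $P_1,P_2$ and gives $(\lambda_2-\lambda_1)(G_2-G_1)\geq 0$; since $\lambda_2>\lambda_1$, this forces $G_2\geq G_1$. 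That is, the pointwise GMV $\sum_i g_i w_i\,x_i^{\lambda}(v)$ is non-decreasing in $\lambda$ for every $v$. Integrating this inequality against $f(v)$ over $V$ preserves the order, so $V(x^{\lambda_2})\geq V(x^{\lambda_1})$, which is the claim.

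The hard part is the first step, namely the pointwise optimality characterization: I must justify that $x^{\lambda}(v)$ is a genuine maximizer of the linear objective (the rearrangement step) and handle ties in the scores $s_i(\lambda)$. Such ties occur only on a measure-zero set under the regularity/continuity assumptions, and in any case the revealed-preference inequalities hold for \emph{any} selected pair of optimizers, so the argument is robust to the tie-breaking rule. Once pointwise optimality is established, the comparative-statics cancellation and the final integration are routine.
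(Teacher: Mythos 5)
Your proof is correct and takes essentially the same route as the paper: the identical revealed-preference trick of writing the two optimality inequalities for $x^{\lambda_1}$ and $x^{\lambda_2}$, adding them so the revenue terms cancel, and concluding $(\lambda_2-\lambda_1)\bigl(V(x^{\lambda_2})-V(x^{\lambda_1})\bigr)\geq 0$. The only (harmless) difference is that you apply the cancellation pointwise in $v$ — after justifying pointwise optimality of the sorted allocation via the rearrangement argument and tie-handling — and then integrate, which yields the slightly stronger pointwise monotonicity of GMV, whereas the paper applies the same two inequalities directly to the integrated objective of the inner dual problem.
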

\begin{proof}
Assume there are two numbers $\lambda_2 >\lambda_1\geq0$. The corresponding optimal allocations are $x^{\lambda_2}(v)$ and $x^{\lambda_1}(v)$, respectively. By the optimality, we have
\begin{align*}
& \int_V \sum_{i=1}^n w_i(\phi_i(v_i)+\lambda_1 g_i)x^{\lambda_1}_i(v)f(v)dv-\lambda_1 V_0\\
\geq & \int_V \sum_{i=1}^n w_i(\phi_i(v_i)+\lambda_1 g_i)x^{\lambda_2}_i(v)f(v)dv-\lambda_1 V_0,\\
& \int_V \sum_{i=1}^n w_i(\phi_i(v_i)+\lambda_2 g_i)x^{\lambda_2}_i(v)f(v)dv-\lambda_2 V_0\\
\geq & \int_V \sum_{i=1}^n w_i(\phi_i(v_i)+\lambda_2 g_i)x^{\lambda_1}_i(v)f(v)dv-\lambda_2 V_0.
\end{align*}
Combining the two inequalities, we have
$$(\lambda_2-\lambda_1)\int_V \sum_{i=1}^n w_i g_i(x^{\lambda_2}_i(v)-x^{\lambda_1}_i(v))f(v)dv = (\lambda_2-\lambda_1)(V(x^{\lambda_2})-V(x^{\lambda_1}))\geq0.$$
Since $\lambda_2 >\lambda_1$, we obtain $V(x^{\lambda_2})-V(x^{\lambda_1})\geq0$. It indicates the monotonicity.
\end{proof}

\begin{algorithm}
\caption{Dichotomy for Outer Problem}
\label{alg:algorithm}
\textbf{Input}: A small number $\epsilon > 0$, a large number $\lambda_{max}$ which makes $V(x^{\lambda_{max}})\geq V_0$. \\
\textbf{Output}: Optimal  $\lambda^*$.
\begin{algorithmic}[1] 
\STATE Initially, let $\lambda=0$ and run mechanism $\mathcal{M}$ with $\alpha=1$.
\IF {$V(x^0)\geq V_0$}
\STATE $\lambda^*=0$.
\ELSE
\STATE $\lambda_{min}=0$;
\WHILE{$\lambda_{max}-\lambda_{min}> \epsilon$}
\STATE Let $\lambda_{mid}= (\lambda_{max}+\lambda_{min})/2$ and run mechanism $\mathcal{M}$ with $\alpha=1/(1+\lambda_{mid})$.
\IF {$V(x^{\lambda_{mid}})\geq V_0$}
\STATE $\lambda^{max}=\lambda_{mid}$.
\ELSE
\STATE $\lambda^{min}=\lambda_{mid}$.
\ENDIF
\ENDWHILE
\ENDIF
\STATE \textbf{return} $\lambda^*$.
\end{algorithmic}
\end{algorithm}

Consequently, plugging the optimal $\lambda^*$ into $x_{ik}^{\lambda}(v)$ and $p_{ik}^{\lambda}(v)$, we derive the optimal feasible mechanism, $\mathcal{\bar{M}}=(x_i^{\lambda^*}(v), p^{\lambda^*}_i(v))$, for the relaxed constrained problem. Since $x_{ik}^{\lambda}(v)$ is 0-1 form, $\mathcal{\bar{M}}$ is the optimal mechanism for the unrelaxed constrained problem.


Owing to the strong duality, the constrained problem is equivalent to an unconstrained problem in some sense, if we choose a proper parameter. Therefore, we have the following theorem to demonstrate their relationship.
\begin{theorem}\label{thm:rlt}
The constrained problem with threshold $V_0$ is equivalent to one unconstrained problem with coefficient $\alpha^*$, where $\alpha^*= 1/(\lambda^*+1)$ and $\lambda^*$ is the optimal Lagrangian multiplier in the constrained problem.
\end{theorem}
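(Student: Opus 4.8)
The plan is to show that the two problems are solved by one and the same feasible mechanism, namely $\mathcal{M}$ instantiated at $\alpha^* = 1/(\lambda^*+1)$. The starting point is the strong-duality identity established via Theorem~\ref{thm2}, which already rewrote the constrained objective as $\min_{\lambda \geq 0} \max_{x \in \bar{\mathcal{X}}} [R(x) + \lambda(V(x) - V_0)]$. At the optimal multiplier $\lambda^*$ the inner maximizer $x^{\lambda^*}$ is, by complementary slackness together with Lemma~\ref{lem3}, a $0$--$1$ allocation that is both feasible for (C1.1)--(C1.2) and attains the primal optimum of $CST(V_0)$; so it suffices to identify $x^{\lambda^*}$ with the optimal allocation of $UCST(\alpha^*)$.

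First I would rewrite the inner objective at a fixed $\lambda$. Choosing the payment rule (\ref{pym-rule}) forces $U_i(x,p,0)=0$, so by the same virtual-value reformulation used in Lemma~\ref{lem1} the inner objective equals $\int_V \sum_{i\in I} (\phi_i(v_i)+\lambda g_i) w_i x_i(v) f(v)\,dv - \lambda V_0$. Since $\lambda^* \geq 0$, the constant $\alpha^* = 1/(\lambda^*+1)$ lies in $(0,1]$, and multiplying the objective by this positive constant does not move its argmax. After scaling and discarding the additive constant $-\alpha^* \lambda^* V_0$, the objective becomes $\int_V \sum_{i\in I} (\alpha^* \phi_i(v_i) + (1-\alpha^*) g_i) w_i x_i(v) f(v)\,dv$, which is exactly the quantity that Lemma~\ref{lem1} shows $UCST(\alpha^*)$ must maximize. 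Equivalently, the ranking score of the inner problem satisfies $\psi_i^{\lambda^*}(v_i) = \psi_i(v_i)$ under $\alpha = \alpha^*$, so the allocation rule (\ref{alc-rule-lambda}) at $\lambda^*$ coincides with (\ref{alc-rule}) at $\alpha^*$.

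From here the conclusion is immediate. The two allocation rules being identical, they select the same items for the same slots on every bid profile, hence $x^{\lambda^*}$ equals the $UCST(\alpha^*)$ allocation; and because the payment rule (\ref{pym-rule}) is determined pointwise by the allocation through the Myerson formula of Lemma~\ref{lem-IC}, the induced payments --- and therefore the entire mechanisms --- agree. Thus the optimal feasible mechanism $\mathcal{\bar{M}}$ for $CST(V_0)$ is exactly the optimal feasible mechanism for $UCST(\alpha^*)$, which is the claimed equivalence.

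I expect the only delicate point to be the precise reading of ``equivalence'': the two programs have different objectives and feasible regions, so what is actually proved is that their optimal mechanisms coincide, with the volume constraint of $CST(V_0)$ being absorbed into the coefficient $\alpha^*$. Making this rigorous hinges on (i) complementary slackness guaranteeing that $x^{\lambda^*}$ is simultaneously feasible and optimal for the constrained problem, which in turn relies on the attainment of the dual minimum (strong duality, Theorem~\ref{thm2}) and the monotonicity of $V(x^\lambda)$ from Lemma~\ref{lem4}, and (ii) the invariance of the argmax under the affine transformation $t \mapsto \alpha^* t - \alpha^* \lambda^* V_0$ of the objective. Everything else is bookkeeping on the already-derived virtual-value form.
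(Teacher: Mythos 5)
Your proposal is correct and takes essentially the same route as the paper, which proves Theorem~\ref{thm:rlt} only implicitly through the Section~\ref{sec:cst} derivation: strong duality (Theorem~\ref{thm2}) plus complementary slackness and Lemmas~\ref{lem3} and~\ref{lem4} identify $x^{\lambda^*}$ as primal-optimal, and the rescaling $\alpha^* = 1/(\lambda^*+1)$ turns the inner Lagrangian into the $UCST(\alpha^*)$ objective, i.e., $\psi_i^{\lambda^*}(v_i) = \psi_i(v_i)$. Your explicit handling of the affine transformation of the objective and the identification of payments via Lemma~\ref{lem-IC} simply spells out what the paper compresses into ``owing to the strong duality,'' so there is no substantive difference in approach.
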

\begin{proof}
    Since the strong duality of the relaxed constrained problem, we know that the optimal solution of primal problem is equal to the optimal solution of dual problem. On the other hand, the dual problem is the relax unconstrained problem with coefficient $\alpha^*= 1/(\lambda^*+1)$ and has the same optimal solution as the unrelaxed version, which is an unconstrained problem.
\end{proof}

\section{Extensions}
\label{sec:ext}
When allowing that advertisements and organic items can be arranged in a fully hybrid way, one inevitable problem is that sometimes there may be too many ad items displaying in the optimal allocations. Even though this type of allocations can achieve the optimal tradeoff, these scenarios will hurt the  e-commerce platform in the long run. In this section, we generalize the integrated system to tailor for more  practical requirements. We make two extensions: one is to restrict the number of ad items shown in one result page, the other is to add the constraint on the sparsity of ad items. In each extension, we find the optimal mechanism for the unconstrained problem (generalizing the Theorem \ref{thm1}), and build up the relationship between the constrained problem and the unconstrained problem (extending Theorem \ref{thm:rlt}).
\subsection{The Integrated Ad System with Number Budget on Ad Items}
\label{sec:ext:nb}
In this subsection, we still study designing the optimal mechanisms with an extra requirement of bounding the total number of ad items. Formally, if we require the number of ad items within $c$, we only need to add the constraint, $\sum_{i\in A}\sum_{k}x_{ik}(v)\leq c$, into $\mathcal{X}$ and denote new feasible region as $\mathcal{X'}$.

We first extend the results of unconstrained problem to this setting. From Lemma \ref{lem-IC}, \ref{lem1} and mechanism $\mathcal{M}$, we have known that items are ranked by their revised virtual values, $\psi_i(v_i)$. To abide by the new constraints, ad items appear at most $c$ times in the final layout. It is reasonable for us to take the top $c$ ad items (ordered by their highest revised virtual value) into account while ranking all items. In view of this point, we generalize the mechanism $\mathcal{M}$ to $\mathcal{M'}(c)$:

{\centering\fbox{\begin{minipage}{0.97\columnwidth}
\paragraph{Mechanism $\mathcal{M}'(c)$:}
\begin{enumerate}
\item Sort all ad items into non-increasing order by $\psi_i(v_i)$. Remain the top $c$ ad items and delete the other ad items;
\item For the remaining ad items and all organic items, run the mechanism $\mathcal{M}$ with coefficient $\alpha$.
\end{enumerate}
\end{minipage}
}\par}
Due to the optimality of mechanism $\mathcal{M}$ and satisfying the restriction on the number of ad items, we can claim that mechanism $\mathcal{M}'(c)$ is the optimal mechanism in current setting.
\begin{theorem}\label{thm:budget}
Given $\alpha\in [0,1]$ and the budget $c$, the mechanism $\mathcal{M'}(c)$ is the optimal mechanism for the unconstrained problem $UCST(\alpha)$ with the budget on the number of ad items.
\end{theorem}
\begin{proof}
    For the step 1, it remains the top-$c$ best ad items with respect to the revised virtual value, which satisfies the budget constraint. Since the optimality of mechanism $\mathcal{M}$, it outputs the optimal arrangement within $c$ ad items. 
\end{proof}

As for the constrained problem, we follow the similar logic used in Section \ref{sec:cst}. However, the main difference and difficulty is that, the coefficient matrix of $\mathcal{X}'$ is not totally unimodular anymore, compared with the previous constrained problem. 
Fortunately, even though the coefficient matrix lacks of the unimodularity, we can prove that the relaxed problem still has the 0-1 form optimal solutions by transferring it to a min-cost max-flow problem (Lemma \ref{lem5} shown in appendix). Based on the analysis in Section \ref{sec:cst} and the result of Lemma \ref{lem5}, the constrained problem still can simplify to an unconstrained problem by choosing the proper parameter, which means that we generalize the Theorem \ref{thm:rlt} into the case with budget constraint.

\subsection{The Sparse Integrated Ad System}
\label{sec:ext:sp}
In this subsection, we investigate another extension that can control the sparsity of ad items. It is motivated by the reality that some e-commerce platforms hope that ad items appear intermittently, rather than emerge together. We consider two kinds of constraints on sparsity of ad items: the sparsity on row and the sparsity on column.


\subsubsection{The Sparsity on Row}
\label{sec;ext:sp:row}
When we search the products by personal computer on e-commerce platforms, the search results are usually shown in several rows and each row contains several items. Inspired by this layout, we hope to constrain the number of ads in each row to comfort the users. In view of this point, we divide one search-result page into multiple rows and each row contains consecutive $l$ slots. Then, we require that there are at most $c$ ad items displayed in these $l$ slots. We call this model as the sparse integrated ad system -- row (SIASR). From the mathematical perspective, the above requirement can be described as a group of constraints: $\sum_{i\in A}$ $\sum_{k=ml+1}^{(m+1)l} x_{ik}(v)\leq c, \quad m=0, 1,\ldots, \lfloor K/l \rfloor$ and $\sum_{i\in A}\sum_{k=\lfloor K/l\rfloor l+1}^{K} x_{ik}(v)\leq c$. Add them to $\mathcal{X}$ and denote the new constraint feasible set as $\mathcal{\tilde{X}}$. 

We begin with the unconstrained problem in this case. Observing these new constraints, we can find that each one is actually a constraint on the number of ad items for $l$ slots. In another word, if we divide one search-result page into $\lfloor K/l \rfloor+1$ parts, every part is an IAS with number budget on ad items (introduced in Section \ref{sec:ext:nb}). Consequently, We can run mechanism $\mathcal{M}'(c)$ for each part to output a feasible arrangement for SIASR. 
Since the optimality of $\mathcal{M}'(c)$, every part achieves the optimal arrangement, which induces the global optimal arrangement for all slots. Therefore, we derive the optimal mechanism in this setting, defined as $\mathcal{\tilde{M}}(c,l)$.

{\centering\fbox{\begin{minipage}{0.97\columnwidth}
\paragraph{Mechanism $\mathcal{\tilde{M}}(c, l)$:}
\begin{enumerate}
\item Run mechanism $\mathcal{M}'(c)$ to allocate the first $l$ slots on all items;
\item Run mechanism $\mathcal{M}'(c)$ to allocate the next $l$ slots on all remaining items;
\item Repeat step 2 until all slots are allocated. (In the last round, the number of slots may be less than $l$).
\end{enumerate}
\end{minipage}
}\par}

\begin{theorem}\label{thm:SIASR}
Given $\alpha\in [0,1]$, the number of items in a row, $l$ and the budget $c$ , the mechanism $\mathcal{\tilde{M}}(c, l)$ is the optimal mechanism for the unconstrained problem $UCST(\alpha)$ of SIASR.
\end{theorem}
\begin{proof}
    Since the optimality of mechanism $\mathcal{M}'(c)$, for consecutive $l$ slots, it outputs the optimal arrangement within $c$ ad items. On the other hand, the step 2 assures that it achieves the current optimal layout with budget constraint among the different groups. Therefore, the allocation of mechanism $\mathcal{\tilde{M}}(c, l)$ is the global optimal.
\end{proof}

For the constrained problem, guiding from the method in Section \ref{sec:cst}, the remaining question is to show the 0-1 form of the optimal solution of the relaxed problem. Compared to subsection \ref{sec:ext:nb}, the problem in this subsection is trickier, since multiple parts enhance the difficulty on construction of the equivalent min-cost max-flow problem. However, we overcome it by Lemma \ref{lem6} (shown in appendix). So far, we solve the constrained problem of SIASR and extend the Theorem \ref{thm:rlt} to this setting.

\subsubsection{The Sparsity on Column}

In this extension, we would like to investigate the case in terminal of mobile devices. Generally, after inputting a keyword, the search results are displayed in one column (not the same as the layout of terminal of personal computer). If we still want to restrict the sparsity of ads, we need to add the constraints on any consecutive $l$ slots, i.e., there are at most $c$ ad items displayed in any consecutive $l$ slots. Mathematically, add a series of constraints, $\sum_{i\in A}\sum_{k=m+1}^{m+l} x_{ik}(v)\leq c, m=0, 1,\ldots, K-l$, into $\mathcal{X}$ and denote the new feasible region as $\mathcal{\hat{X}}$. We call this model as the sparse integrated ad system -- column (SIASC).

We focus on the unconstrained problem first. Note that, compared to SIASR, SIASC owns more sparsity constraints on consecutive slots. However, for the first $l$ slots, the constraint is the same. Intuitively, the allocation for first slots should also be the same. After deciding the first $l$ slots and observing the next constraint on slot $2$ to slot $l+1$, it only need to decide which item (ad item or organic item) should be allocated to slot $l+1$ to meet the sparsity constraint. In this way, we can continue filling the following slots by checking the criteria of sparsity. Obeying this rule, we propose the mechanism $\mathcal{\hat{M}}(c, l)$.

{\centering\fbox{\begin{minipage}{0.97\columnwidth}
\paragraph{Mechanism $\mathcal{\hat{M}}(c, l)$:}
\begin{enumerate}
\item Run mechanism $\mathcal{M}'(c)$ to allocate the first $l$ slots on all items;
\item Run mechanism $\mathcal{M}'(c-a)$ to allocate the next one slot on all remaining items, where $a$ is the number of ad items in the last $l-1$ slots;
\item Repeat step 2 until all slots are allocated.
\end{enumerate}
\end{minipage}
}\par}

\begin{theorem}\label{thm:SIASC}
Given $\alpha\in [0,1]$, the number of items in a row, $l$ and the budget $c$ , the mechanism $\mathcal{\hat{M}}(c, l)$ is the optimal mechanism for the unconstrained problem $UCST(\alpha)$ of SIASC.
\end{theorem}
\begin{proof}
On the one hand, since the optimality of mechanism $\mathcal{M}'(c)$, for consecutive $l$ slots, it outputs the optimal arrangement within $c$ ad items. On the other hand, the step 2 of mechanism $\mathcal{\hat{M}}(c, l)$ assures the budget constraint among the different groups. Therefore, the allocation of mechanism $\mathcal{\hat{M}}(c, l)$ is the global optimal.
\end{proof}

As for the constrained problem, we can extend the method used in SIASR into this setting. The unique difference is that, in this setting , there are more constraints on consecutive $l$ slots. However, it does not affect the difficulty eventually. We only need to modify the structure of Figure \ref{network2} slightly: add a mount of nodes, $k_i$ to represent more groups and link the arcs  $s^{''}_{j}k_i, j\in\{i, i+1, \cdots, i+l-1\}$ and the arcs $k_i t$. The other analysis is similar. In this way, we generalize the Theorem \ref{thm:rlt} to the case of SIASC.

\section{Experiments}
\label{sec:exp}
In this section, we will verify our mechanisms with four experiments. 
Our experimental environment is that we select 356 different keywords randomly, where each keyword contains at least 400 and at most 2000 candidate items (including ad items and organic items). For each keyword, we simulate the first result page with 20 slots. We find that the values of ad items are fitted into a lognormal distribution, and assume that quality factor $w_i=1$ for experiment 1-3, but has a distribution in experiment 4. The estimated volumes of items can be extracted from the real data, directly. Because of the diversity of the keywords and to avoid speciality, we repeat experiments on each keyword and aggregate them into normalized results (figures).

\subsection{The Monotonicity of Revenue and Volume}
The first experiment aims to testify the availability of mechanism $\mathcal{M}$. We exploit eleven different values of the weighted coefficient $\alpha$ from 0 to 1 with equal gap 0.1. When $\alpha$ is determined, for each keyword, we generate a group of random bids by the lognormal distribution
\footnote{The number of generating bids is the number of advertisers bidding for this keyword in real data. For different keywords, we may learn the different lognormal distributions.}
learned from the real data and run mechanism $\mathcal{M}$, recording the corresponding revenue and GMV. We repeat this experiment for 5000 times, and take mean of the revenues and GMVs as the expected revenue and the expected GMV of this keyword. Aggregating the results of all keywords, we bring forth the final result shown in Figure \ref{exp1}.

\begin{figure}[htbp]
    \centering
    \includegraphics[width=0.5\textwidth]{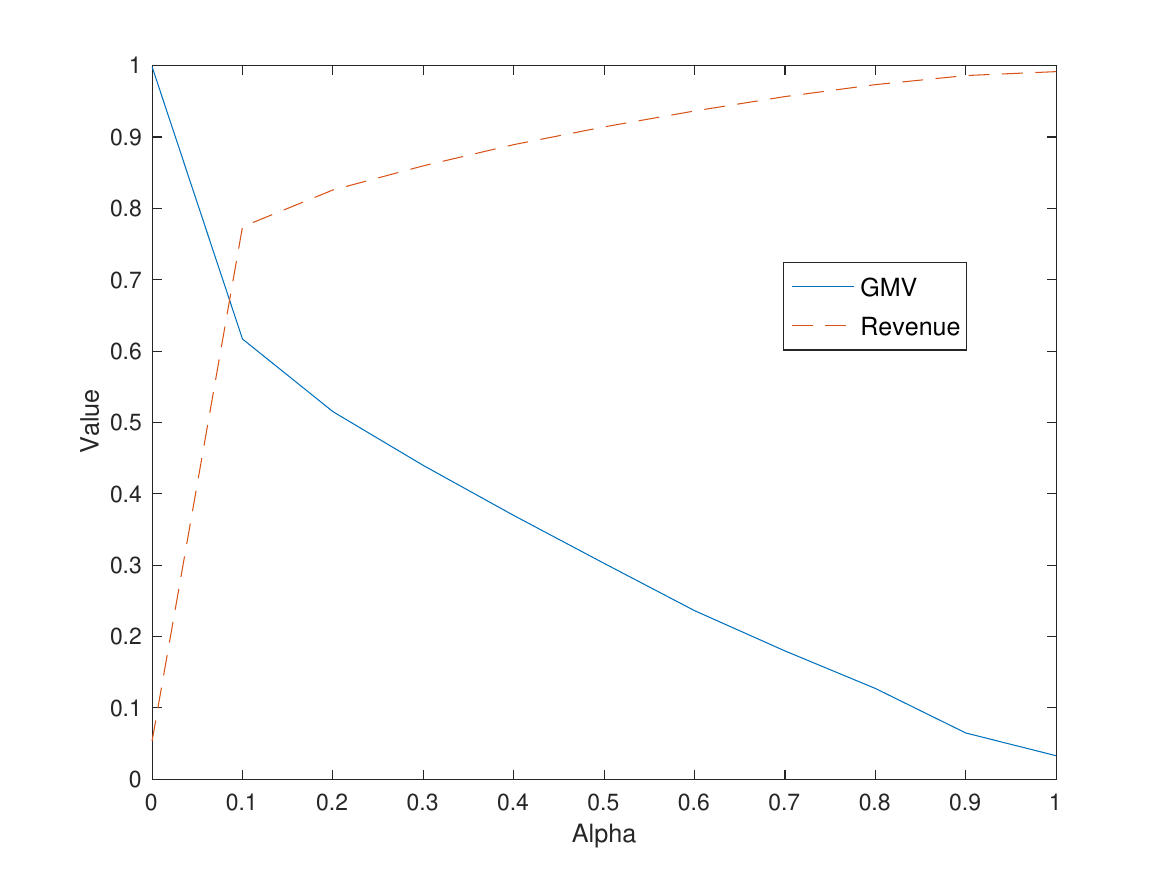}
    \caption{Expected revenue and GMV under different $\alpha$}
    \label{exp1}
\end{figure}

Observing from Figure \ref{exp1}, we can see that as $\alpha$ increases from 0 to 1, revenue improves while GMV decreases, which fulfills our intuition that the two objectives are in conflict. Hence, if platforms focus more on revenue (or GMV), they only need to increase (decrease) the parameter $\alpha$ of mechanism $\mathcal{M}$. On the other hand, for each $\alpha$, how much fraction of the optimal value the mechanism $\mathcal{M}$ achieves can be found from Figure \ref{exp1}. This can help the platform choose the appropriate $\alpha$ to balance the revenue and volume in practice.

\subsection{The Execution of Theorem \ref{thm:rlt}}
The second experiment aims to verify Theorem \ref{thm:rlt}. For each keyword, we give eleven different threshold of GMV (we pick up them from 0 to the maximum feasible GMV with equal gap). For each threshold, we execute Algorithm \ref{alg:algorithm} to derive the optimal $\lambda^*$ and compute the corresponding $\alpha^*$. During executing Algorithm \ref{alg:algorithm}, since $V(x)$ is a calculus, we always repeat generating bids for 500 times and take mean of GMVs as the value of $V(x)$. After testing all the thresholds, we can depict the relationship curve of $V_0$ and $\alpha$, shown in Figure \ref{exp2}.

\begin{figure}[htbp]
    \centering
    \includegraphics[width=0.5\textwidth]{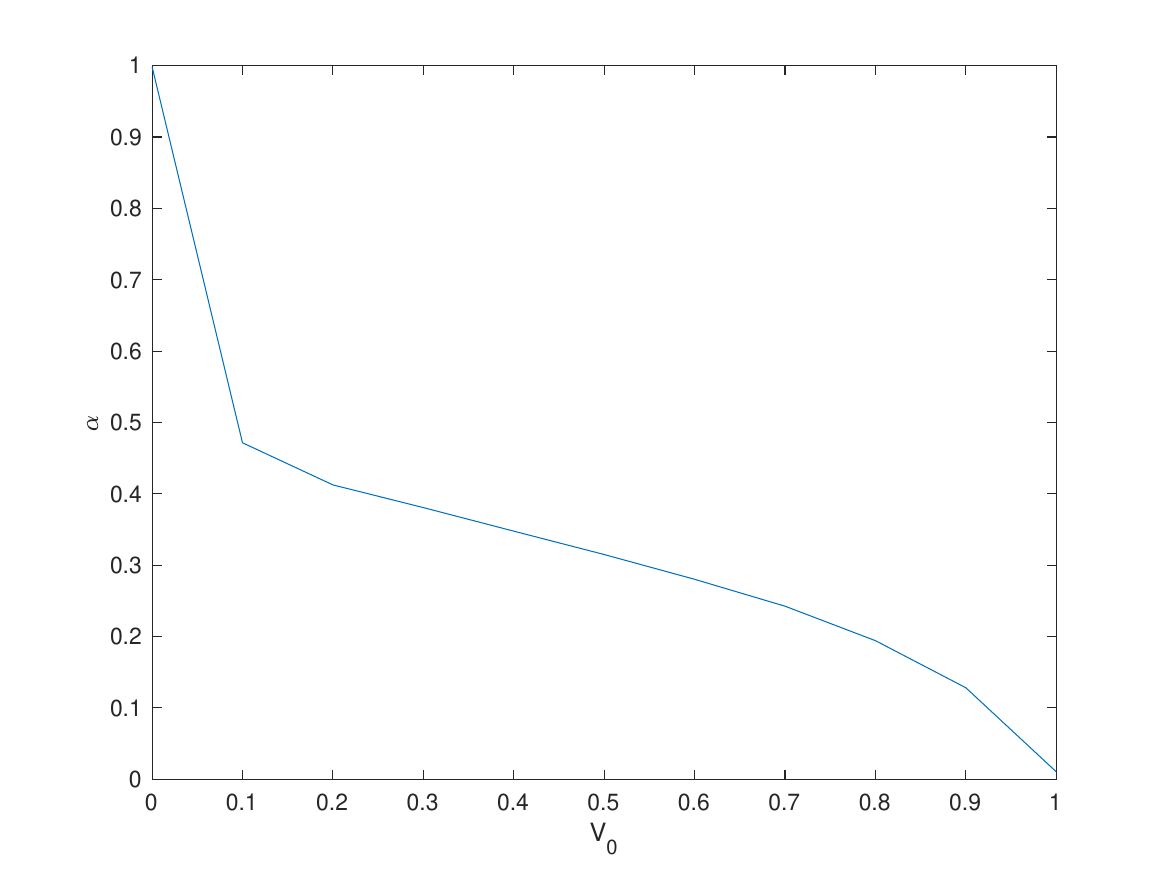}
    \caption{The relationship between $V_0$ and $\alpha$}
    \label{exp2}
\end{figure}

Figure \ref{exp2} demonstrates that the larger of the volume threshold (i.e., $V_0$) is, the smaller $\alpha$ will be. This result is coincident with experiment \ref{exp1}. The most significant point is that, for constrained problem, it can be transferred into an unconstrained problem, which reduces the difficulty of operation. Experiment \ref{exp2} also shows the correctness of Theorem \ref{thm:rlt} and operability of our mechanism.

\subsection{Priority of Our Mechanisms}
In this subsection, we show the superiority of our mechanisms by two experiments. In each experiment, we compare our mechanisms with the used mechanisms of e-commerce platforms. The difference of two experiments is that, in second one, we take the correlation between the weight and the value of advertisers into consideration. We start with the uncorrelated case.

\subsubsection{Uncorrelated Case}
In the third experiment, we want to compare our optimal mechanism with the commonly used mechanism (i.e., only taking out some particular topmost slots to ad items). First, the current mechanism can be described as: we fix several topmost slots, and run Myerson optimal auction\footnote{Note that the platforms usually run generalized second-price(GSP) auction. \cite{Ostr} showed that when all advertisers have identical regular distributions of values, the expected revenue of GSP with reserve price $r^*$ such that $\phi(r^*)=0$ in the bidder-optimal envy-free equilibrium is the same as the expected revenue of Myerson optimal auction.} on ad items. For organic part, we allocate organic items to the remaining slots by their volumes sorted in non-increasing order. In this way, we can compute the expected revenue and GMV in the traditional layout. Then, we use the just obtained GMV as the threshold $V_0$ and run mechanism $\mathcal{\bar{M}}$ (i.e., we solve the constrained problem with $V_0$). We derive another expected revenue from our mechanism. Repeat the above procedure, as the number of fixed slots changes from 1 to 8. Finally, we compare the expected revenue gained in two mechanisms and illustrate them in Figure \ref{exp3}.

\begin{figure}[htbp]
    \centering
    \includegraphics[width=0.5\textwidth]{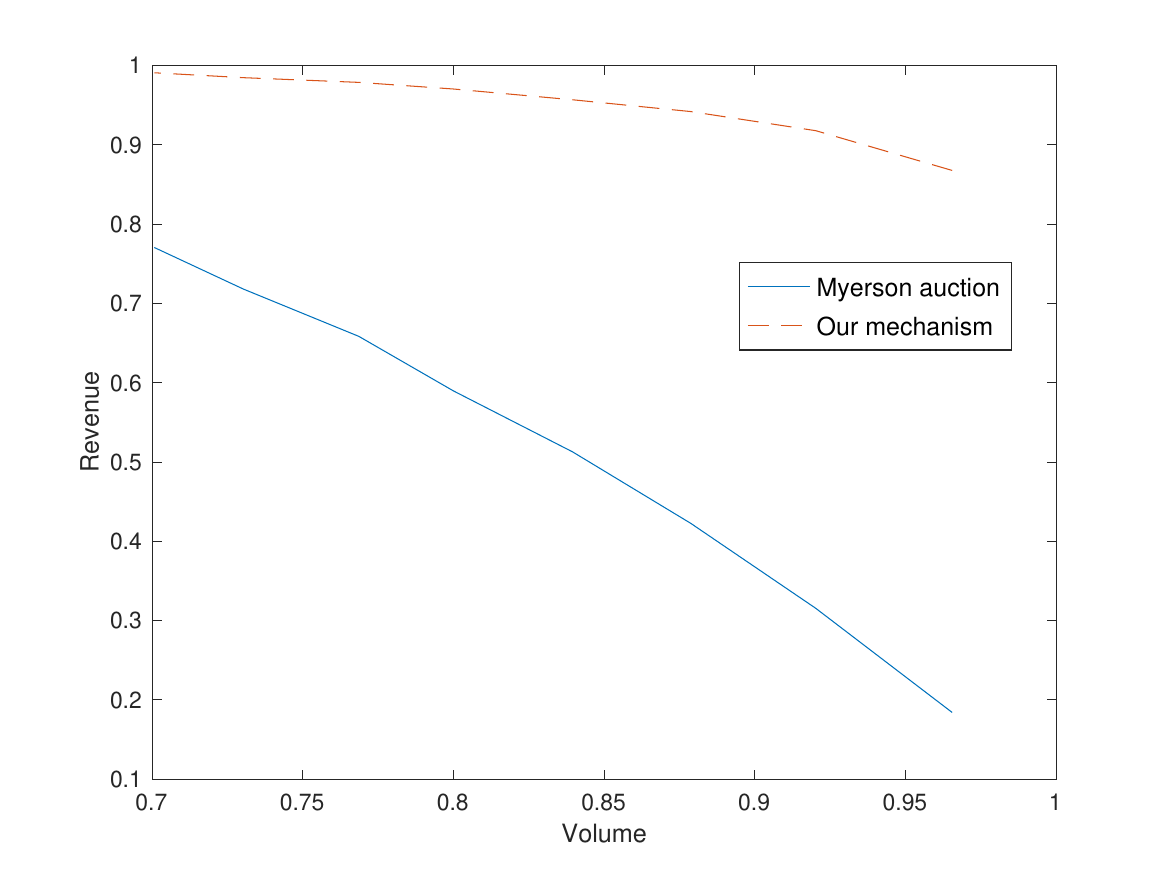}
    \caption{Comparison between two mechanisms}
    \label{exp3}
\end{figure}

From Figure \ref{exp3} we can see, on the premise of reaching the same GMV, the revenue curve of our mechanism lies on the top of that of Myerson optimal mechanism. The main reason is that, the conventional layout always maximize revenue in particular slots and then maximize GMV in the rest slots, which are separate two steps, while we maximize revenue and guarantee GMV, simultaneously. Therefore, IAS is more flexible, and can keep the same GMV but gain more revenue. This also reflects the superiority of mixture arrangements.

\subsubsection{Correlated Case}
The last experiment is based on the third experiment. In this experiment, we assume that the quality factor, $w_i$, follows a distribution (we learn that it is fitted into a beta distribution). We wonder that, if there exists a correlation between $w_i$ and $v_i$, what the results of experiment 3 will be. Five different correlation coefficients, -1, -0.5, 0, 0.5, 1, are studied. For each correlation coefficient, we repeat the experiment 3 and depict revenue curves. We union five pictures into one figure shown in Figure \ref{fig:corr}.

\begin{figure*}[htbp]
\centering
\subfigure[r=-1]{
\label{corr:1}
\begin{minipage}[t]{0.5\textwidth}
\centering
\includegraphics[width=\textwidth]{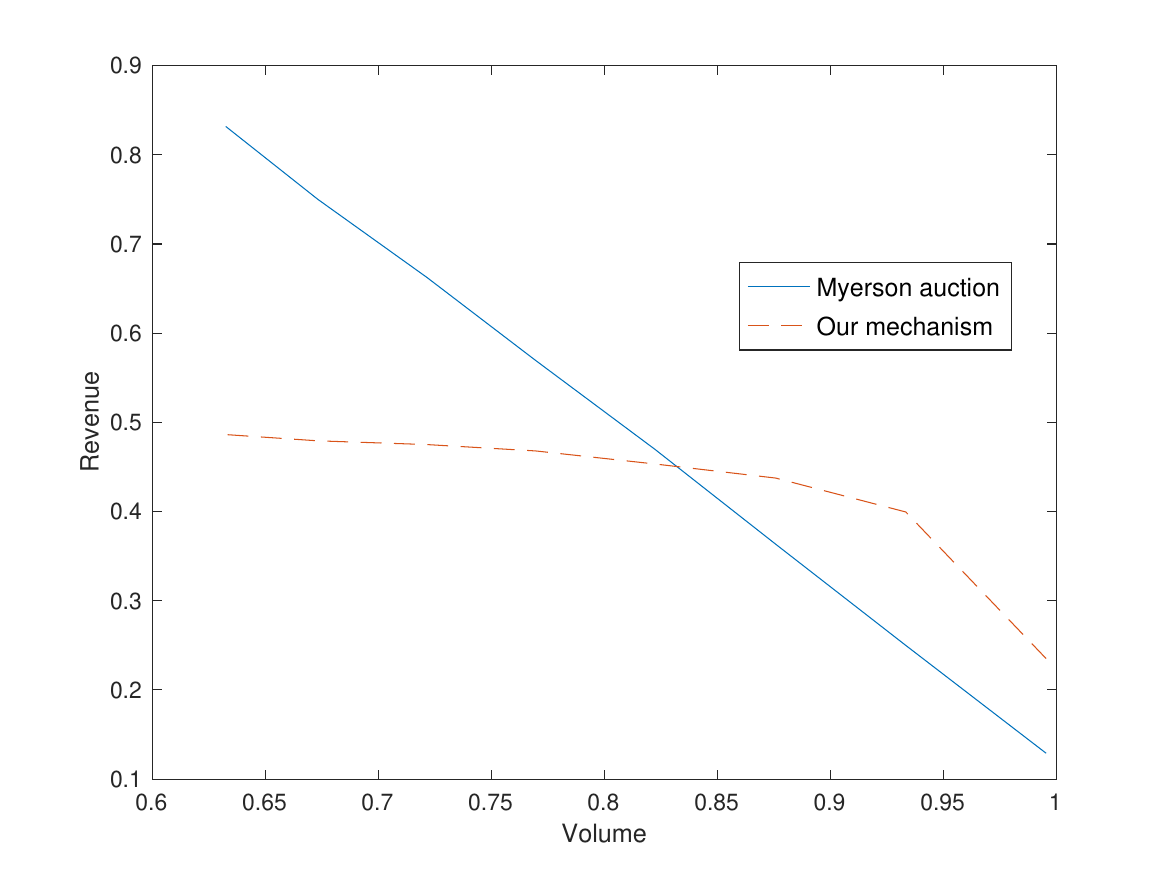}
\end{minipage}%
}%
\subfigure[r=-0.5]{
\begin{minipage}[t]{0.5\textwidth}
\centering
\includegraphics[width=\textwidth]{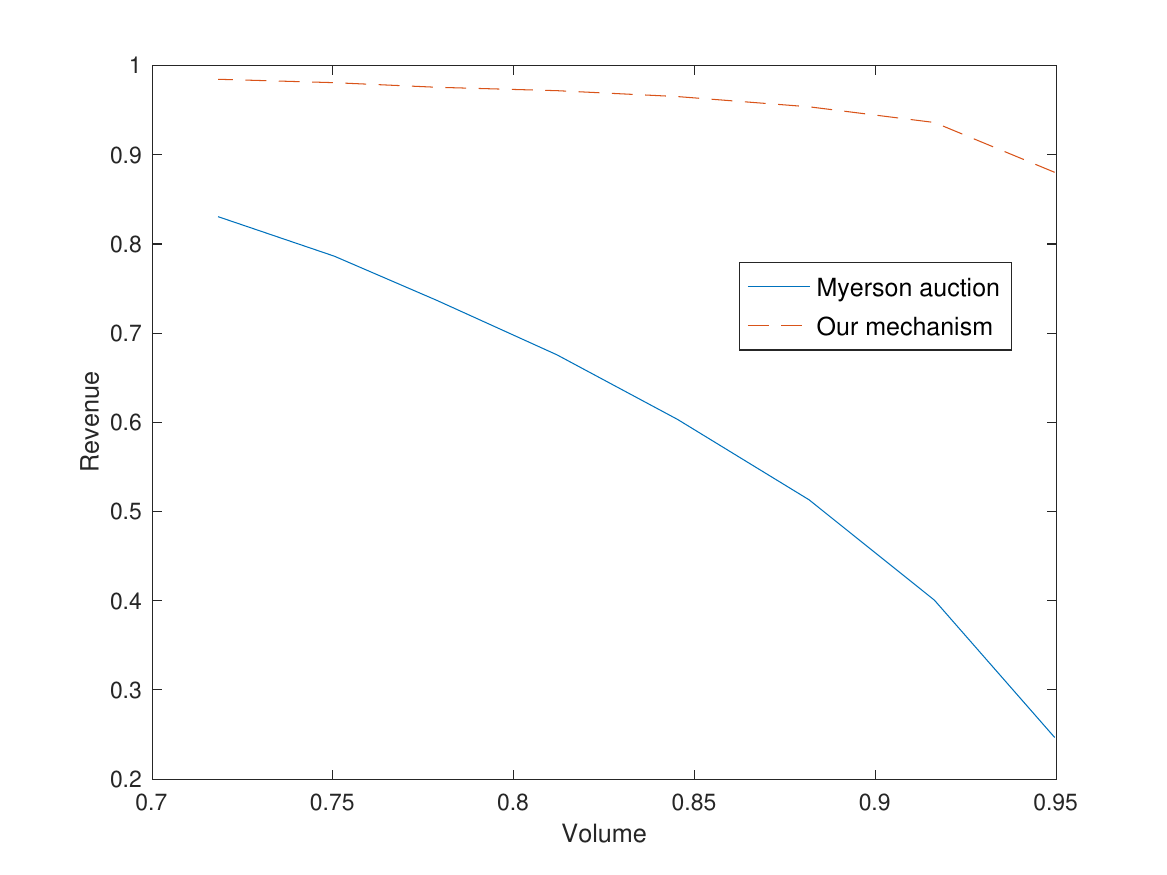}
\end{minipage}%
}%
\\
\subfigure[r=0]{
\begin{minipage}[t]{0.5\textwidth}
\centering
\includegraphics[width=\textwidth]{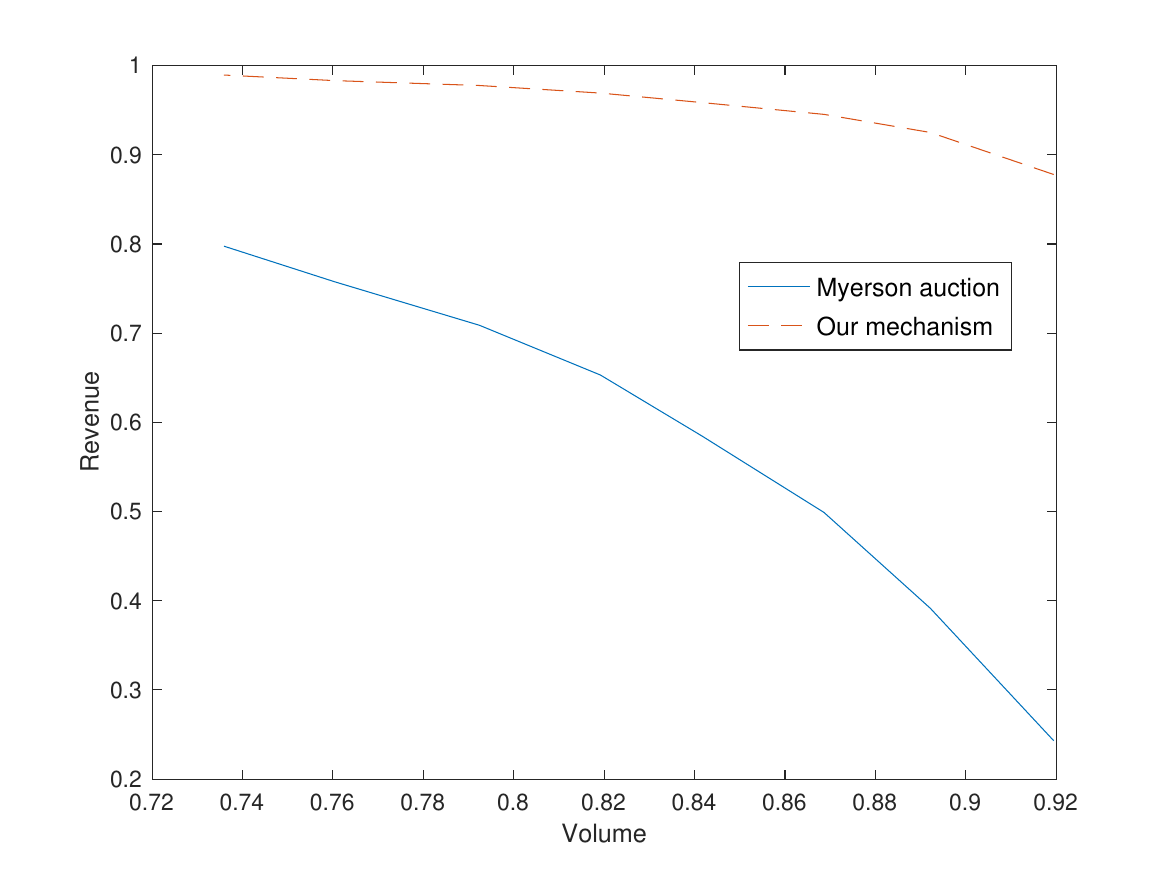}
\end{minipage}
}%
\subfigure[r=0.5]{
\begin{minipage}[t]{0.5\textwidth}
\centering
\includegraphics[width=\textwidth]{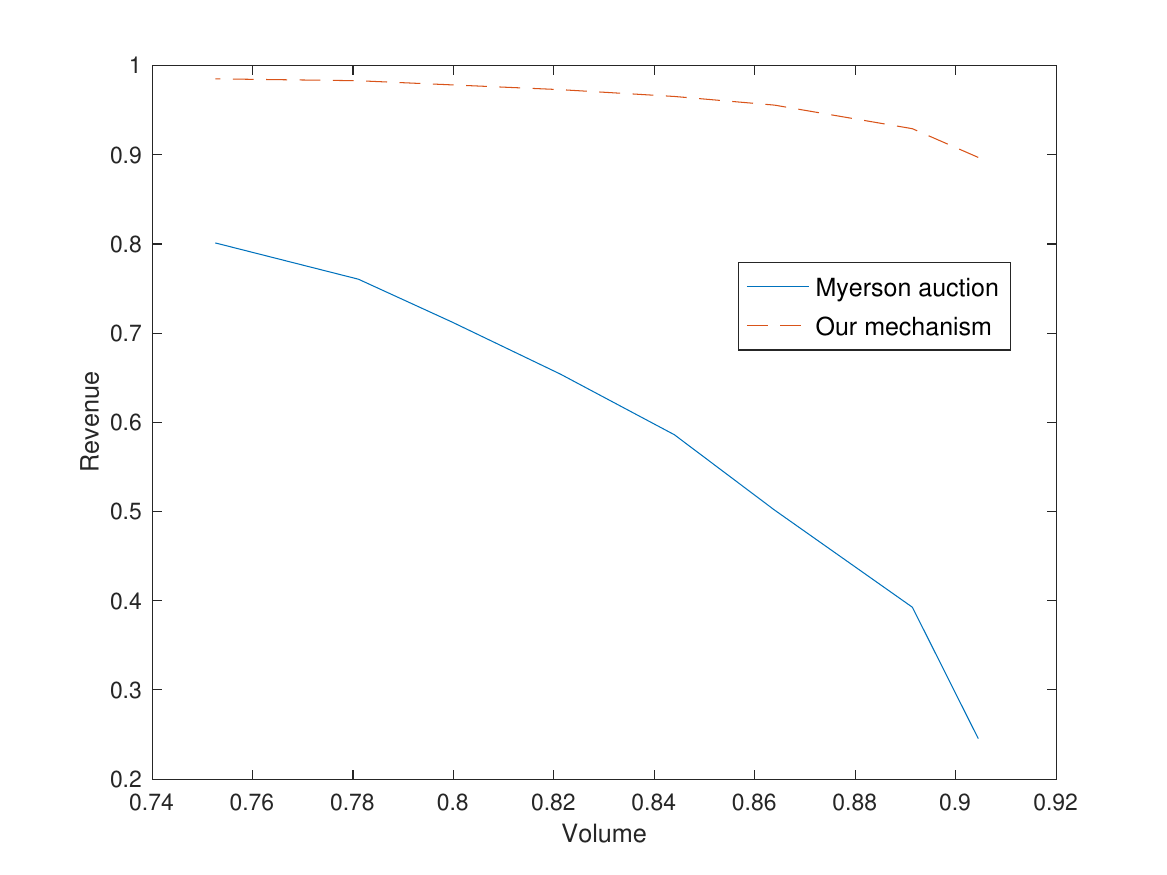}
\end{minipage}%
}%
\\
\subfigure[r=1]{
\begin{minipage}[t]{0.5\textwidth}
\centering
\includegraphics[width=\textwidth]{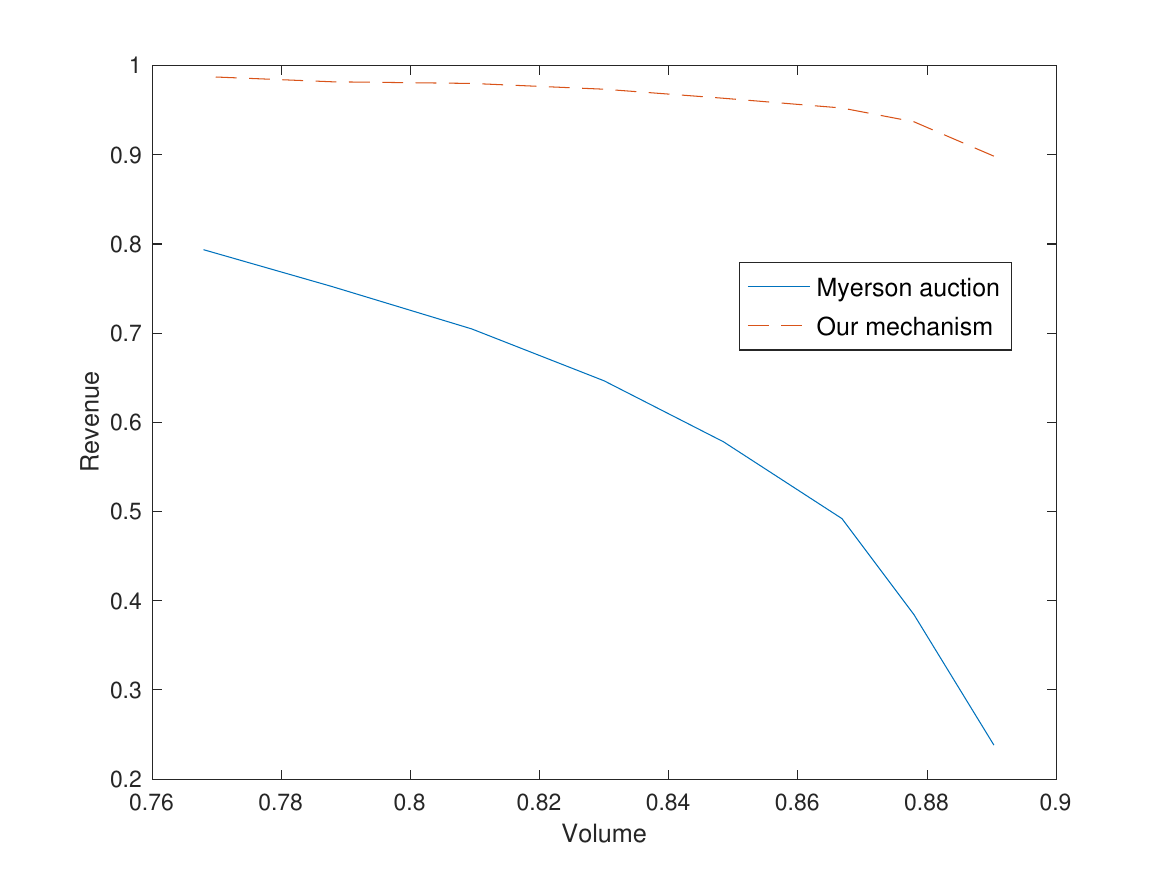}
\end{minipage}%
}%
\centering
\caption{Correlation experiments}
\label{fig:corr}
\end{figure*}

Figure \ref{fig:corr} illustrates that, except the condition of totally negative correlation, other conditions present the similar trend as Figure \ref{exp3}. However, when correlation coefficient changes from 1 to -0.5, the largest revenue gap becomes smaller. As for Figure \ref{corr:1}, it can be explained as that, when the quality factor and value have a negative correlation, the ad items with higher value contribute less GMV, because their quality factor are relatively smaller, which leads the fact that the $w_ig_i$ is smaller. To achieve the same GMV, our mechanism may sacrifice some revenue. In brief, except some extreme conditions, our mechanism show the advantage even if there exists the correlation between $w_i$ and $v_i$.

\section{Conclusion}
\label{sec:conclusion}
To sum up, for the new features of e-commerce online platforms that combine advertisement and organic items for display purposes, we offer a thorough and exact study on designing mechanisms to balance both areas of ``volume'' and revenue. 
For different ways of trade-off and practical requirements, optimal feasible mechanisms can be designed in order to realize them. We also empirically evaluate our mechanism and demonstrate its advantage. 

As far as we know, our research is the first through study on mixed arrangement of advertisement items and organic items. There are still many problems remain open, to name a few:
\begin{itemize}
\item In this paper, we only focus on feasible mechanisms. There may exist mechanisms which are not limed to BIC or IR that can realize a higher trade-off.
\item The valuation distribution is publicly known to each advertiser. A further area to explore would be how to design mechanisms that avoid cheating on distribution.
\end{itemize}

\newpage
\bibliographystyle{splncs04}
\bibliography{adallocation}

\newpage
\setcounter{page}{1}
\appendix
\section*{Appendix}


\section{Necessary Example and Tables}

\begin{example}
We consider the mainline of one search-result page setting with 10 slots for items. The click-through-rates (CTRs) of these 10 slots are from 0.1 to 1 with the same gap. There are total 10 candidates, 3 sponsored advertisements and 7 organic items
. The detailed numbers of bids and volumes are shown in Appendix, Table \ref{GSP}, where the ids of items are ranked by the volume.

In Table \ref{GSP}, we fix the top 3 slots for the ads and run the GSP mechanism, that is, if an advertiser wins, she only need to pay the lowest price that can keep the current slot. Then we rank the rest organic items by GMV, obtaining the outcome of the conventional pattern.  The GMV and revenue in this setting are 451.3 and 21.9.

\begin{table}[ht!]
\begin{center}
\scalebox{0.95}{
    \begin{tabular}{ | l | l | l | l | p{1.5cm} | l | l |p{1.3cm}|p{1.3cm}|}
    \hline

Id &Type &Volume &Bid &Rank by 0.5Bid + 0.5Volume &Payment &CTR &Actual GMV	&Actual Payment\\ \hline
A1	&ad&70&15&15&12&1&70&12\\ \hline
A2	&ad&75	&12	&12	&11	&0.9	&67.5	&9.9\\ \hline
A3	&ad&90	&11	&11	&0	&0.8	&72	&0\\ \hline

O1 	&organic &100 &0 &0 &&0.7 &70 &0\\ \hline
O2	&organic&90&0 &0&&0.6&54&0\\ \hline
O3	&organic	&85	&0	&0	&&0.5	&42.5	&0\\ \hline
O4	&organic	&80	&0	&0	&&0.4	&32	&0\\ \hline
O5	&organic	&75	&0	&0	&&0.3	&22.5	&0\\ \hline
O6	&organic	&70	&0	&0	&&0.2	&14	&0\\ \hline
O7	&organic	&68	&0	&0	&&0.1	&6.8	&0\\ \hline
\multicolumn{7}{|l|}{Total}&451.3&21.9\\ \hline
\end{tabular}
}
\caption{GSP mechanism}
\label{GSP}
\end{center}
\end{table}

Now, we regard organic results and advertisements as an entirety and rank them at the same time.  Think over a new allocation method: rank by 0.5bid + 0.5volume for all items and for advertisements, still charge the lowest price that keeps the current slot. The result is shown in Appendix, Table \ref{ITG}. The GMV and revenue are enhanced to 465.8 and 22.

\begin{table}[ht!]
\begin{center}
\scalebox{0.95}{
    \begin{tabular}{ | l | l | l | l | p{1.5cm} | l | l |p{1.3cm}|p{1.3cm}|}
    \hline

Id &Type &Volume &Bid &Rank by 0.5Bid + 0.5Volume &Payment &CTR &Actual GMV	&Actual Payment\\ \hline
A3	&ad	&90	&11	&50.5	&10	&1	&90	&10\\ \hline
O1	&organic	&100	&0	&50	&	&0.9	&90	&0\\ \hline
O2	&organic	&90	&0	&45&		&0.8	&72	&0\\ \hline
A2	&ad	&75	&12	&43.5	&10	&0.7	&52.5	&7\\ \hline
O3	&organic	&85	&0	&42.5&		&0.6	&51	&0\\ \hline
A1	&ad	&70	&15	&42.5	&10	&0.5	&35	&5\\ \hline
O4	&organic	&80	&0	&40	&	&0.4	&32	&0\\ \hline
O5	&organic	&75	&0	&37.5&		&0.3	&22.5	&0\\ \hline
O6	&organic	&70	&0	&35	&	&0.2	&14	&0\\ \hline
O7	&organic	&68	&0	&34	&	&0.1	&6.8	&0\\ \hline
\multicolumn{7}{| l |}{Total} &465.8	&22\\ \hline
\end{tabular}
}
\caption{Integrated Mechanism}
\label{ITG}
\end{center}
\end{table}

We can see the layout of mixed arrangement presented by Table \ref{ITG} performs better on both GMV and revenue. This answers the second question and shows that the mixed arrangement could benefit.
\end{example}

\section{Missing Proofs of Section \ref{sec:ext}}
\subsection{Proof of Lemma \ref{lem5}}
\begin{lemma}\label{lem5}
For the IAS with budget on ad items, the relaxed inner problem of constrained problem can be constructed as a min-cost max-flow problem, and has a 0-1 form optimal solution.
\end{lemma}
\begin{proof}
This problem, indeed, can be reduced to a min-cost max-flow problem, shown by Figure \ref{network}.

\begin{figure}[htbp]
  \centering
  \includegraphics[width=0.5\textwidth]{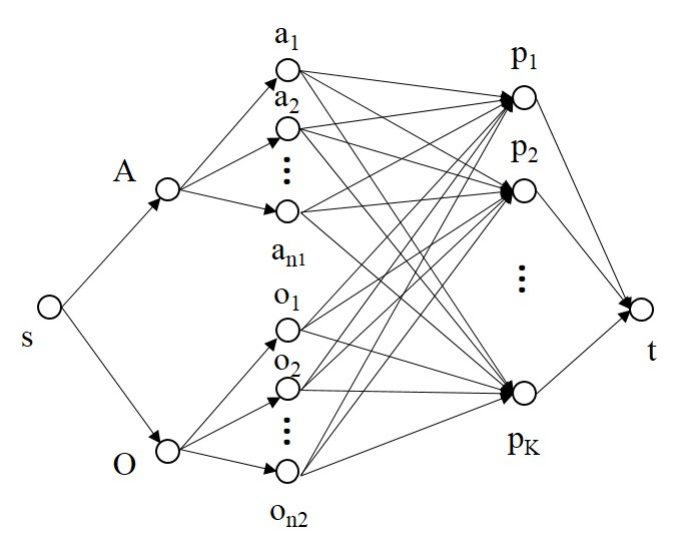}
  \caption{Min-cost max-flow network}
  \label{network}
\end{figure}

In Figure \ref{network}, $a_i$ and $o_j$ represent ad items and organic items, respectively. $p_k$ represents the slots. The capacities of directed arcs from left to right are: the capacity of $sA$ is $c$, and the capacity of $sO$ is infinity. For $Aa_i$, $Oo_j$, $a_ip_k$, $o_jp_k$, $\forall i,j,k$, the capacity is 1. As for $p_kt$, $\forall k$, its capacity is  1.

The cost of directed arcs are:  for $a_ip_k$ or $o_jp_k$, $\forall, i,j,k$, the cost of unit flow is $-w_i(\phi_i(v_i)+\lambda g_i)\beta_k$ or $-w_j\lambda g_j\beta_k$. For other arcs, the cost is 0.

We can check that the total cost generating from above flow is equal to the value of objective function of inner problem. The conservation of flows are exactly the constraints in $\mathcal{X}'$. Since this min-cost max-flow problem has a 0-1 form optimal solution, the optimal solution of relaxed inner problem is also 0-1 form.
\end{proof}

\subsection{Proof of Lemma \ref{lem6}}

\begin{lemma}\label{lem6}
For the SIASR, the relaxed inner problem of constrained problem can be constructed as a min-cost max-flow problem, and has a 0-1 form optimal solution.
\end{lemma}

\begin{proof}
We still reduce this problem into a min-cost max-flow problem, shown in Figure \ref{network2}. In Figure \ref{network2}, $a_i$ and $o_j$ represent ad items and organic items, respectively. $s'_k$ and $s''_k$ represent the slots. $k_i$ is the node to restrict the number of ads in group $i$. $s_k$ is the node to restrict that one slot only has one item.

\begin{figure}[htbp]
  \centering
  \includegraphics[width=0.5\textwidth]{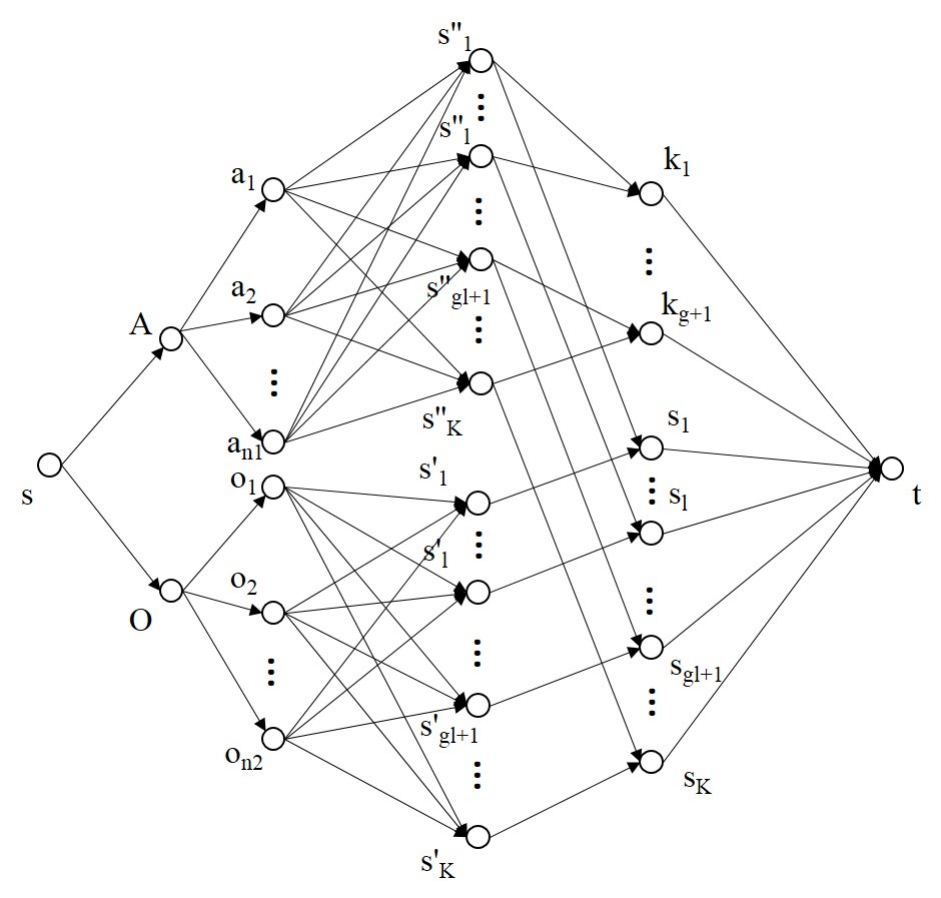}
  \caption{Min-cost max-flow network 2}
  \label{network2}
\end{figure}

The capacities of some certain directed arcs (from right to left) are: for $k_it$, $\forall i$, the capacity is $c$; 
for $Aa_i$ and $a_is''_k$, $\forall i,k$, the capacity is 2; for $sA$ and $sO$, the capacity is infinity. The capacity of the others is 1.

The cost of certain directed arcs are: for $a_is''_k$, $\forall i,k$, the cost is $-(w_i(\phi_i(v_i)+\lambda g_i)\beta_k)/2$; for $o_js'_k$, $\forall, i,j,k$, the cost is $-w_j\lambda g_j\beta_k$. For other arcs, the cost is 0.

Note that the feasible region of this network flow problem is a little broader than the relaxed inner problem. This is because the flow from $s''_j$ to $s_k$ and the flow from $s''_j$ to $k_i$ can be different in a feasible flow, but in the constraints of relaxed inner problem, they must be the same. However, we can guarantee that, in the optimal solution, the value of these two flow are equal. Since the capacity of in-arc of $s''_j$ is 2 and the capacity of out-arc is 1, in the optimal solution, the flow passing the $s''_j$ must be split equally. It means that the optimal solution of this min-cost max-flow problem can be transferred to a 0-1 form feasible solution to the relaxed inner problem. Because the total cost generating from above flow is equal to the value of objective function of relaxed inner problem, this corresponding 0-1 form solution is the optimal solution of the relaxed inner problem.
\end{proof}

\end{document}